\newtheorem{theorem}{Theorem}
\newtheorem{lemma}{Lemma}
\newtheorem{assumption}{Assumption}
\newtheorem{definition}{Definition}
\newcommand{\be}{\begin{equation}}
\newcommand{\ee}{\end{equation}}
\newcommand{\argmax}{\mathop{\rm argmax}}
\newcommand{\mcup}{\mathsmaller{\bigcup}}
\newcommand{\mcap}{\mathsmaller{\bigcap}}
\def\Pr{\mathbb{P}}
\def\xdn{\mathbf{v}_n}
\def\vn{\mathbf{v}_n}
\def\e{\mathbf{e}}
\definecolor{light-gray}{gray}{0.95}
\long\def\old#1{}
\def\op{\overline{p}}
\def\oq{\overline{q}}
\def\u{\mathbf{u}}
\title{On Learning with Finite Memory}
\author{Kimon~Drakopoulos,~\IEEEmembership{Student Member,~IEEE,}   Asuman Ozdaglar,~\IEEEmembership{Member,~IEEE,}  John N. Tsitsiklis,~\IEEEmembership{Fellow,~IEEE}% 
         \thanks{All authors are with the Laboratory of Information and Decision Systems,
                          Massachusetts Institute of Technology,
                          77 Massachusetts Avenue, Room 32-D608,
                          Cambridge, MA 02139. Emails: \{\tt\small kimondr,asuman,jnt\}@mit.edu}
                          \thanks{Research partially supported by Jacobs Presidential Fellowship, the NSF under grant CMMI-0856063, ARO grant W911NF-09-1-0556, AFOSR MURI FA9550-09-1-0538.}}
\begin{document}

\maketitle

\begin{abstract}
We consider an infinite collection of agents {who make} decisions, sequentially, about an unknown underlying binary state of the world.  Each  agent, prior to  making a decision, receives an independent private signal whose distribution depends on the state of the world. Moreover, each agent also observes the decisions of its last $K$ immediate predecessors.  We study conditions  under which the agent decisions converge to the correct value of the underlying state.

We focus on the case where the private signals have bounded information content and investigate whether \textit{learning} is possible, that is, whether there exist decision rules for the different agents  that result in the convergence of their sequence of individual decisions  to the correct state of the world.  We first consider learning in the almost sure sense and show that it is impossible, for any value of $K$.  We then explore the possibility of convergence in probability of the decisions to the correct state. Here, a distinction  arises: if $K=1$, learning in probability is impossible under any decision rule, while for $K\geq 2$, we design a decision rule that achieves it.

We finally consider {a new model, involving} forward looking strategic agents, each of which  maximizes the discounted sum (over all agents) of the probabilities of a correct decision. (The {case, studied in previous literature,} of myopic agents who maximize the probability of their own decision being correct is an extreme special case.) We show that for any value of $K$, for any equilibrium of the associated Bayesian game, and under the assumption that each private signal has bounded information content, learning in probability fails to obtain. 
 \end{abstract}

\section{Introduction} \label{intro}

In this paper, we study variations and extensions of a model introduced and studied in Cover's seminal work \cite{Cov69}. We consider a Bayesian binary hypothesis testing problem over an ``extended tandem'' network architecture whereby each agent $n$ makes a binary decision $x_n$, based on an independent private signal $s_n$ (with a different distribution under each hypothesis) and on the decisions $x_{n-1},\ldots,x_{n-K}$ of its $K$ immediate predecessors, where $K$ is a positive integer constant. We are interested in the question of whether learning is achieved, that is, whether the sequence $\{x_n\}$ correctly identifies the true hypothesis (the ``state of the world,'' to be denoted by $\theta$), almost surely or in probability, as $n\to\infty$. For $K=1$, this coincides with the model introduced by Cover \cite{Cov69} under a somewhat different interpretation, in terms of a single memory-limited agent who acts repeatedly but can only remember its last decision.

At a broader, more abstract level, our work is meant to shed light on the question whether distributed information held by a large number of agents can be successfully aggregated in a decentralized and bandwidth-limited manner. Consider a situation where each of a large number of agents has a noisy signal about an unknown underlying state of the world $\theta$. This state of the world may represent an unknown parameter monitored by decentralized sensors, the quality of a product, the applicability of a therapy, etc. If the individual signals are independent and the number of agents is large, collecting these signals at a central processing unit would be sufficient for inferring (``learning'') the underlying  state $\theta$. However, because of communication or memory constraints, such centralized processing may be impossible or impractical. 
It then becomes of interest to inquire whether $\theta$ can be learned  under 
a decentralized mechanism where each agent communicates a finite-valued summary of its information (e.g., a purchase or voting decision, a comment on the success or failure of a therapy, etc.) to a subset of the other agents, who then refine their own information about the unknown state. 

Whether learning will be achieved under the model that we study depends on various factors, such as the ones discussed next {\color{blue}: } 
\begin{itemize}
\item[(a)] As demonstrated in \cite{Cov69}, the situation is qualitatively different depending on certain assumptions on the information content of individual signals. We will focus exclusively on the case where each signal has bounded information content, in the sense that the likelihood ratio associated with a signal is bounded away from zero and infinity --- the so called Bounded Likelihood Ratio (BLR) assumption. The reason for our focus is that in the opposite case (of unbounded likelihood ratios), the learning problem is much easier; indeed, \cite{Cov69} shows that almost sure learning is possible, even if $K=1$.
\item[(b)] An aspect that has been little explored in the prior literature is the distinction between different learning modes, learning almost surely or in probability. We will see that the results can be different for these two modes.
\item[(c)] The results of \cite{Cov69} suggest that there may be a qualitative difference depending on the value of $K$. Our work will shed light on this dependence.
\item[(d)]  
Whether learning will be achieved or not, depends on the way that agents make their decisions $x_n$. 
In an engineering setting, one can assume that the agents' decision rules are chosen (through an offline centralized process) by a system designer. 
 In contrast, in game-theoretic models, each agent is assumed to be a Bayesian maximizer of an individual  objective, based on the   available  information. 
 Our work will shed light on this dichotomy by considering a special class of individual objectives that incorporate a certain degree of altruism. 
 \end{itemize}

\subsection{Summary of the paper and its contributions}
We provide here a summary of our main results, together with comments on their relation to prior works. 
In what follows, we use the term {\it decision rule} to refer to the mapping from an agent's information to its decision and the term {\it decision profile} to refer to the collection of the agents' decision rules. 
Unless there is a statement to the contrary, all results mentioned below are derived under the BLR assumption.
  \begin{enumerate}
  \item [(a)] {\it Almost sure learning is impossible (Theorem \ref{thm:almsure}).} For any $K\geq 1$, we prove that there exists no 
decision profile that guarantees almost sure convergence of the sequence $\{x_n\}$ of decisions to the state of the world $\theta$. This provides an interesting contrast with the case where the BLR assumption does not hold; in the latter case, almost sure learning is actually possible
\cite{Cov69}.
\item[(b)] {\it Learning in probability is impossible if $K=1$ (Theorem \ref{thm:nolearningwithone}).}  
This strengthens a result of Koplowitz \cite{Kop75} who showed the impossibility of learning in probability for the case where $K=1$ and the private signals $s_n$ are i.i.d.\ Bernoulli random variables.
\item[(c)] {\it Learning in probability is possible if $K\geq 2$ (Theorem \ref{thrm:algorproof}).} 
For the case where $K\geq 2$, we provide a fairly elaborate decision profile that yields learning in probability. 
This result (as well as the decision profile that we construct) is inspired by the positive results in \cite{Cov69} and
\cite{Kop75}, according to which, learning in probability (in a slightly different sense from ours) is possible if each agent can send 4-valued or 3-valued messages, respectively, to its successor. 
In more detail, our construction (when $K= 2$) exploits the similarity between the case of a 4-valued message from the immediate predecessor (as in \cite{Cov69}) and the case of binary messages from the last two predecessors: indeed, the decision rules of two predecessors can be designed  so that their two binary messages convey (in some sense)  information comparable to that in a 4-valued message by a single predecessor. Still, our argument is somewhat more complicated than the ones in \cite{Cov69} and \cite{Kop75}, because in our case, the actions of the two predecessors cannot be treated as arbitrary codewords: they must obey the additional requirement that they equal the correct state of the world with high probability. 

\item[(d)] {\it No learning by forward looking,  altruistic agents
(Theorem \ref{thrm:frwrdlooking}).}
{As already discussed,} when $K\geq 2$, learning is possible, using a suitably designed decision profile. On the other hand, 
if each agent acts myopically (i.e., maximizes the probability that its own decision is correct), 
it is known that learning will not take place (\cite{Cov69,Ban92,Acem10}). To further understand the impact of  selfish behavior, we consider a variation where 
each agent is forward looking, in an altruistic manner: rather than being myopic, each agent takes into account the impact of its  decisions on the error probabilities of future agents. This case can be thought of as an intermediate one, where each agent makes a decision that optimizes its own utility function (similar to the myopic case), but the utility function incentivizes the agent to act in a way that corresponds to good systemwide performance (similar to the case of centralized design). In this formulation, the optimal decision rule of each agent depends on the decision rules of all other agents (both predecessors and successors), which leads to a game-theoretic formulation and a study of the associated equilibria. Our main result shows that 
under any (suitably defined) equilibrium, learning in probability fails to obtain.
In this sense, the forward looking, altruistic setting falls closer to the myopic rather than the engineering design version of the problem. Another interpretation of the result is  that  the carefully designed decision {profile} that can achieve learning will not emerge through the incentives provided by the altruistic model; this is not surprising because the designed decision {profile is} quite complicated. 
\end{enumerate}

\subsection{Outline of the paper}
The rest of the paper is organized as follows. In Section \ref{se:lit}, we review some of the related literature. In Section \ref{model}, we provide a description of our model, notation, and terminology. In Section \ref{chap:almost}, we show that almost sure learning is impossible.
In Section \ref{sec:noinprob} (respectively, Section \ref{sec:learnalgo}) we show that learning in probability is impossible when $K=1$ (respectively, possible when $K\geq 2$). In Section \ref{se:fwd}, we describe the model of forward looking agents and prove the impossibility of learning. We conclude with some brief comments in Section \ref{se:conc}.

   \section{Related literature}\label{se:lit}
    
The literature on information aggregation in decentralized systems is vast; we will restrict ourselves to the discussion of models that involve a Bayesian formulation and are somewhat related to our work. The literature consists of two main  branches, in statistics/engineering  and  in economics. 
    
    \subsection{Statistics/engineering literature}
    A basic version of the model that we consider was studied in the two seminal papers  \cite{Cov69} and \cite{Kop75}, and which have already been discussed in the Introduction. 
     The same model was also studied in \cite{HeCo70}, which gave a characterization of the minimum probability of error, when all agents decide according to the {\it same}  decision rule.
 The case of myopic agents and     
$K=1$ was briefly discussed in \cite{Cov69} who argued that learning (in probability) fails to obtain.  A proof of this negative result was also given in  \cite{AtPa89}, together with the additional result that myopic decision rules will lead to learning if the BLR assumption is relaxed. Finally, \cite{Dia} studies myopic decisions based on private signals and observation of ternary messages from a predecessor in  a  tandem configuration.
        
   Another class  of decentralized information fusion problems was introduced  in \cite{TeSa81}.  In that work, there are again two hypotheses on the state of the world and each one of a set of agents receives a noisy signal regarding the true state. Each agent summarizes its information in a finitely-valued  message which it sends to a fusion center.  The fusion center solves a classical hypothesis testing problem (based on the messages it has received) and decides on one of the two hypotheses. The problem  is the design of decision rules for each agent so as to minimize the probability of error at the fusion center.   A more general network structure, in which each agent  observes  messages from a specific set of agents  before making a decision was introduced in \cite{Ek82} and \cite{EkTe82}, under the assumption that the topology that describes the message flow is a directed tree.  In all of this literature (and under the assumption that the private signals are conditionally independent, given the true hypothesis) each agent's decision rule should be a likelihood ratio test, parameterized by a scalar threshold. However, in general, the problem of optimizing the agent thresholds is a difficult nonconvex optimization problem --- see  \cite{Jnt93} for a survey.
   
  In the line of work initiated in \cite{TeSa81}, the focus is often on tree architectures with large branching factors, so that the probability of error decreases exponentially in the number of sensors. In contrast, for tandem architectures, as in \cite{Cov69,Kop75, AtPa89, Dia}, and for the related ones considered in this paper, learning often fails to hold or takes place at a slow, subexponential rate \cite{TTW08}. The focus of our paper is on this latter class of architectures and the conditions under which learning takes place.

    \subsection{Economics literature}
    
     A number of papers, starting with \cite{Ban92} and \cite{BHW92},  study learning in a setting where each agent, prior to making a decision,  observes the {history of decisions by all} of its predecessors. Each agent is a Bayesian maximizer of the probability that its decision is correct. The main finding is the emergence of ``herds'' or `` informational cascades,'' where agents copy possibly incorrect decisions of their predecessors and ignore their own information, a phenomenon consistent with that discussed by Cover \cite{Cov69} for the tandem model with $K=1$.
The most complete analysis of this framework (i.e., with complete sharing of past decisions) is provided in \cite{SmSo00}, which also draws a distinction between the cases where the BLR assumption holds or fails to hold, and establishes results of the same flavor as those in \cite{AtPa89}. 
   
   A broader class of observation structures is studied in  \cite{SmSo98} and \cite{BaFu04},  with each agent   observing an unordered sample of decisions drawn from the past, namely, the number of sampled predecessors who have taken each of the two actions. The most comprehensive analysis of this setting, where agents are Bayesian but do not observe the full history of past decisions, is provided  in \cite{Acem10}.  This paper considers agents who observe the  decisions of a stochastically generated set of predecessors and provides conditions on the private signals and the network structure under which asymptotic learning  (in probability) to the true state of the world is achieved. 
   
To the best of our knowledge, the first paper that studies forward looking agents is  \cite{SmSo06}: each agent  minimizes the discounted sum of error probabilities of all subsequent agents, including their own. This reference considers the case where the full  past history is observed and shows 
that herding on an incorrect decision is possible, with positive probability. (On the other hand, learning is possible if the BLR assumption is relaxed.)
Finally, \cite{AlKa11} considers a similar model and explicitly characterizes a simple and tractable equilibrium that generates a herd, showing again that even with payoff interdependence and forward looking incentives, {payoff-maximizing agents who observe past decisions can} fail to properly aggregate the available information.

\section{The Model and Preliminaries} \label{model}

In this section we present the observation model (illustrated in Figure \ref{f:model}) and  introduce our basic terminology and notation. 

\subsection{The observation model}
\begin{figure}
\centering
\includegraphics[width=8.5cm]{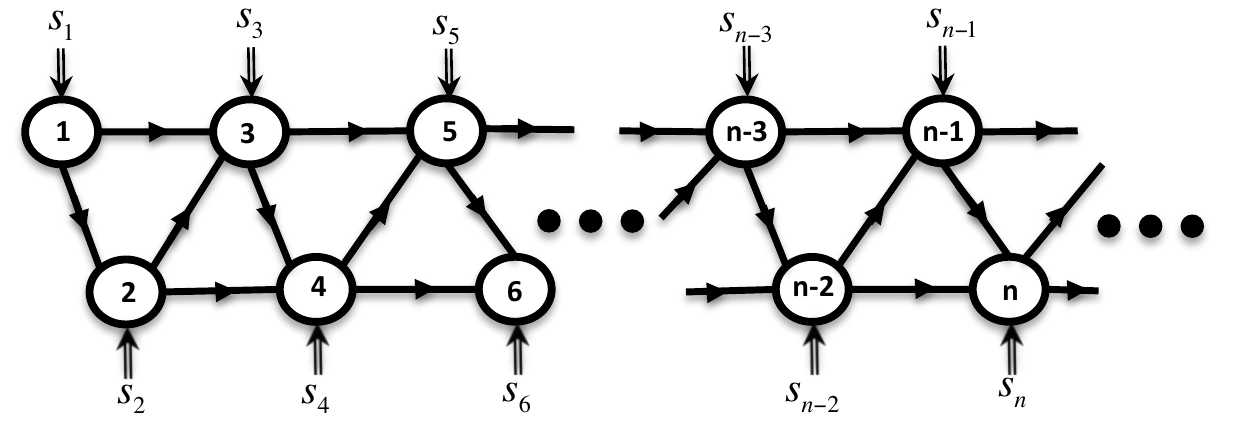}
\caption{\label{f:model}The observation model. If the unknown state of the world is $\theta=j$, $j\in\{0,1\}$, the agents receive independent private signals $s_n$ drawn from a distribution $\mathbb{F}_j$,  and  also observe the decisions of the $K$ immediate predecessors. In this figure, $K=2$.  If agent $n$ observes the decision of agent $k$, we draw an arrow pointing from $k$ to $n$. }
\end{figure}

 We consider an infinite sequence of agents, indexed  by $n \in \mathbb{N}$, where
 $\mathbb{N}$ is the set of natural numbers. There is an underlying \textbf{state of the world} $\theta \in \{0,1\}$, which is modeled as a random variable whose value is unknown by the agents. To simplify notation, we assume that both  of the underlying states are a priori equally likely, that is, $\Pr(\theta=0)=\Pr(\theta=1)={1}/{2}$.  
 
 Each agent $n$ forms posterior beliefs about this state based on  a \textbf{private signal}   that takes values in a set $S$,  and also by observing the decisions of its $K$ immediate predecessors. We  denote by $s_n$ the random variable representing agent $n$'s private signal, while we use $s$ to denote  specific values  in  $S$. 
Conditional on the state of the world $\theta$ being equal to zero (respectively, one), the private  signals are independent random variables distributed  according to a probability measure $\mathbb{F}_0$ (respectively, $\mathbb{F}_1 $) on the set $S$. Throughout the paper, the following two assumptions  will always remain in effect. First, $\mathbb{F}_0$ and $\mathbb{F}_1$ are absolutely continuous with respect to each other, implying that no signal value can be fully revealing about the correct state. Second,  $\mathbb{F}_0$ and $\mathbb{F}_1$ are not identical, so that the private signals can be  informative. 

Each agent $n$ is to make a
\textbf{decision}, denoted by $x_n $, which takes  values in $\{0,1\}$. The information available to agent $n$  consists of its private signal $s_n$
and the random vector 
\[
\mathbf{v}_{n}= (x_{n-K},\ldots,x_{n-1}).
\]
of  decisions of its $K$ immediate predecessors. (For notational convenience an agent $i$ with index $i\leq 0$ is identified with agent 1.)
The decision $x_n$ is made according to a 
\textbf{decision rule}  $d_n:\{0,1\}^K \times S \rightarrow \{0,1\}$: 
$$x_n=d_n(\xdn, s_n).$$ 
  A \textbf{decision  profile} is a sequence $d=\{d_n\}_{n \in \mathbb{N}}$ of decision rules.  Given a decision profile $d$, the sequence $\textbf{x}=\{x_n\}_{n \in \mathbb{N}}$  of agent decisions is a well defined stochastic process, described by  a probability measure to be denoted by $\mathbb{P}_d$, 
  or simply by $\mathbb{P}$ if $d$ has been fixed. 
For notational convenience, we also use  $\Pr^j(\cdot)$  to denote the conditional measure under the state of the world $j$, that is $$\Pr^j(\,\cdot\,)=\Pr(\,\cdot \mid \theta=j).$$
  
It is also useful to consider \textbf{randomized} decision rules, 
whereby the decision $x_n$ is determined according to $x_n=d_n(z_n,\xdn,s_n)$, where $z_n$ is an exogenous random variable which is independent for different $n$ and also independent of $\theta$ and $(\xdn,s_n)$. 
(The construction In Section \ref{sec:learnalgo} will involve a randomized decision rule.)

\subsection{An assumption and the definition of learning}  \label{sec:bu}
As mentioned in the Introduction, we focus on the case where every possible private signal value has bounded information content. The assumption that follows will remain in effect throughout the paper and will not be stated explicitly in our results. 
\begin{assumption}\label{def:BLR} {\rm \textbf{(Bounded Likelihood Ratios --- BLR).}} There exist some $m>0$ and  $M < \infty$,  such that the Radon-Nikodym derivative ${d\mathbb{F}_0}/{d\mathbb{F}_1}$ satisfies
\[
m < \frac{d\mathbb{F}_0}{d\mathbb{F}_1}(s) < M,
\]
for almost all $s \in S$ under the measure $(\mathbb{F}_0 +\mathbb{F}_1)/2$ 
%\footnote{ We say that the signal structure has {\textbf{Unbounded Likelihood Ratios}} if the essential infimum  of $d\mathbb{F}_0/d\mathbb{F}_1(s)$  is $0,$ while  the essential supremum of $d\mathbb{F}_0/d\mathbb{F}_1(s)$ is infinity,  under the  measure  $(\mathbb{F}_0 +\mathbb{F}_1)/2$.}.
\end{assumption}

We study two different types of learning. As will be seen in the sequel, the results for these two types are, in general, different. 
\begin{definition}
We say that a decision profile  $d$ achieves  \textbf{almost sure learning} if
\[
\lim_{n\rightarrow \infty}x_n = \theta, \qquad \Pr_d\text{-almost surely},
\]
and that it  achieves  \textbf{learning in probability}  if 
\[
\lim_{n\rightarrow \infty} \mathbb{P}_{d}(x_n=\theta)=1. 
\]
\end{definition}

\section{{Impossibility of} almost sure learning}\label{chap:almost}
In this section, we show that almost sure learning is impossible, for any value of $K$.

\begin{theorem}\label{thm:almsure}
For any given number  $K$ of observed {immediate predecessors},  there  exists no decision profile  that achieves almost sure learning.\end{theorem}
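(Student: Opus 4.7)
The plan is a proof by contradiction. Suppose some decision profile $d$ achieves almost sure learning. Then conditional on $\theta=1$, we have $x_n\to 1$ almost surely, so the state vector $\mathbf{v}_n=(x_{n-K},\ldots,x_{n-1})$ converges to the all-ones vector $\mathbf{1}$ $\Pr^1$-almost surely. Consequently there is some integer $N$ (large enough) with
\[
\Pr^1\!\left(\mathbf{v}_n=\mathbf{1}\ \text{for all}\ n\geq N\right)\geq \tfrac{1}{2}.
\]
For each $n\geq N$, let $A_n=\{s\in S:\,d_n(\mathbf{1},s)=0\}$ be the set of private signals that would make agent $n$ decide $0$ after observing $K$ consecutive $1$'s, and set $\alpha_n=\mathbb{F}_1(A_n)$, $\beta_n=\mathbb{F}_0(A_n)$. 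Because, given $\theta$, the signal $s_n$ is independent of $\mathbf{v}_n$, these are exactly the conditional probabilities of $x_n=0$ given $\mathbf{v}_n=\mathbf{1}$ under $\theta=1$ and $\theta=0$ respectively. The BLR assumption then gives $\beta_n\leq M\alpha_n$.

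Next I would observe that starting from $\{\mathbf{v}_N=\mathbf{1}\}$, the state stays at $\mathbf{1}$ forever iff every subsequent decision equals $1$, so by the tower property
\[
\Pr^1\!\left(\mathbf{v}_n=\mathbf{1}\ \forall n\geq N\,\big|\,\mathbf{v}_N=\mathbf{1}\right)=\prod_{n\geq N}(1-\alpha_n),\qquad \Pr^0\!\left(\mathbf{v}_n=\mathbf{1}\ \forall n\geq N\,\big|\,\mathbf{v}_N=\mathbf{1}\right)=\prod_{n\geq N}(1-\beta_n).
\]
Combining the first identity with the lower bound $1/2$ gives $\prod_{n\geq N}(1-\alpha_n)\geq 1/2$, hence $\sum_{n\geq N}\alpha_n<\infty$; using $\beta_n\leq M\alpha_n$ from BLR, this yields $\sum_{n\geq N}\beta_n<\infty$ and therefore $\prod_{n\geq N}(1-\beta_n)>0$.

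To finish, I use BLR in a second way: the product measures $\mathbb{F}_0^{\otimes (N-1)}$ and $\mathbb{F}_1^{\otimes (N-1)}$ on the first $N-1$ signals are mutually absolutely continuous, and $\{\mathbf{v}_N=\mathbf{1}\}$ is an event in their product $\sigma$-algebra, so $\Pr^1(\mathbf{v}_N=\mathbf{1})>0$ implies $\Pr^0(\mathbf{v}_N=\mathbf{1})>0$. Multiplying with the positive conditional persistence probability under $\theta=0$, I get $\Pr^0(\mathbf{v}_n=\mathbf{1}\ \forall n\geq N)>0$, which contradicts $x_n\to 0$ $\Pr^0$-almost surely. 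The heart of the argument is really just isolating the correct event $\{\mathbf{v}_n=\mathbf{1}\ \text{for all}\ n\geq N\}$ and exploiting BLR in the two roles above (transferring summability from $\alpha_n$ to $\beta_n$, and transferring positive probability of a finite-horizon event between the two hypotheses); the only mildly subtle point is to recognize that any decision rule producing $1$ with overwhelming probability on an all-ones history under $\theta=1$ must, by BLR, also produce $1$ with overwhelming probability under $\theta=0$, so that the all-ones absorbing-like behavior cannot be avoided once it is set up.
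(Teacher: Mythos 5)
Your proof is correct, and it reaches the contradiction by a genuinely different (and somewhat leaner) route than the paper. The paper first argues that the set $U$ of sequences converging to $1$ is countable, extracts a specific trajectory $\mathbf{u}$ with $\Pr^1(\mathbf{x}=\mathbf{u})>0$, and then runs a dichotomy on $\sum_n a_n^1$: if the sum diverges, a conditional Borel--Cantelli lemma kills the trajectory under $\Pr^1$; if it converges, BLR transfers summability to $\theta=0$ and gives $\Pr^0(\mathbf{x}=\mathbf{u})>0$. You bypass both the countable-selection step and the Borel--Cantelli lemma by working directly with the tail event $\{\mathbf{v}_n=\mathbf{1}\ \forall n\geq N\}$, whose probability under $\Pr^1$ is at least $1/2$ by almost-sure convergence; the summability of the switching probabilities $\alpha_n$ then falls out of the positivity of the infinite product, so no case analysis is needed. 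The two uses of BLR in your argument (transferring $\sum\alpha_n<\infty$ to $\sum\beta_n<\infty$, and transferring positivity of the finite-horizon event $\{\mathbf{v}_N=\mathbf{1}\}$) exactly mirror the paper's Lemma~\ref{lem:balancing} and its Eq.~(\ref{eq:posit2}). What your version gives up is the paper's remark that the identical proof covers agents observing an \emph{arbitrary} subset of predecessors: your conditioning on $\mathbf{v}_n=\mathbf{1}$ exploits the fact that, past time $N+K$, the observed window is determined by post-$N$ decisions alone, whereas fixing a full trajectory $\mathbf{u}$ handles arbitrary observation sets with no change. Two trivially fillable details: you should note that $\beta_n<1$ for every $n\geq N$ (immediate from mutual absolute continuity, since $\beta_n=1$ would force $\alpha_n=1$, contradicting $\alpha_n\leq 1/2$) so that $\prod_{n\geq N}(1-\beta_n)>0$ indeed follows from $\sum_{n\geq N}\beta_n<\infty$; and the product identity for $\Pr^1(\mathbf{v}_n=\mathbf{1}\ \forall n\geq N\mid \mathbf{v}_N=\mathbf{1})$ uses that $s_n$ is conditionally independent of the entire past $(s_1,\ldots,s_{n-1})$ given $\theta$, not merely of $\mathbf{v}_n$, which holds in the model.
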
 

The rest of this section is devoted to the proof of Theorem~\ref{thm:almsure}. We note that 
the proof  does not use anywhere the fact that each agents only observes the last $K$ immediate predecessors. The exact same proof establishes the impossibility of almost sure learning even for a more general model where each agent $n$ observes the decisions of an arbitrary subset of its predecessors. 
Furthermore, while the proof is given for the case of deterministic decision rules, the reader can verify that {it also} applies to the case where randomized decision rules are allowed.

The following lemma  is  a simple consequence of the BLR assumption and its proof is omitted.

\begin{lemma} \label{lem:balancing}
For any $\mathbf{u} \in \{0,1\}^K$ and any $j \in \{0,1\}$, we have
\begin{align} 
m\cdot\mathbb{P}^1(x_n=j \mid \,\xdn=\mathbf{u}) \nonumber &<{\mathbb{P}^0(x_n=j \mid \xdn=\mathbf{u} )}\nonumber\\ 
&<M\cdot{\mathbb{P}^1(x_n=j \mid \xdn=\mathbf{u})},
\end{align}
where  $m$ and $M$ are as in Definition \ref{def:BLR}.
\end{lemma}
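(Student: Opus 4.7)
My plan is to unwind the conditional probability to an integral over the private-signal space and then apply the pointwise BLR bound.

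First I would fix an agent index $n$, a vector $\mathbf{u}\in\{0,1\}^K$ in the support of $\mathbf{v}_n$ under both $\mathbb{P}^0$ and $\mathbb{P}^1$ (the remaining case is vacuous/degenerate and I would dispose of it in one sentence by noting that, given BLR, $\mathbb{P}^0(\mathbf{v}_n=\mathbf{u})$ and $\mathbb{P}^1(\mathbf{v}_n=\mathbf{u})$ vanish together), and $j\in\{0,1\}$. Define the preimage
\[
A_{\mathbf{u},j}=\{s\in S:d_n(\mathbf{u},s)=j\}\subseteq S.
\]
Then on the event $\{\mathbf{v}_n=\mathbf{u}\}$ we have $x_n=j$ if and only if $s_n\in A_{\mathbf{u},j}$.

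The key structural observation is that, conditional on $\theta$, the private signal $s_n$ is independent of everything that happens at agents $1,\dots,n-1$, and in particular of $\mathbf{v}_n$. This is because $s_n$ is drawn from $\mathbb{F}_\theta$ independently of the other signals $s_1,\dots,s_{n-1}$, and $\mathbf{v}_n$ is a deterministic function of $\theta$ and those earlier signals (plus, in the randomized case, the independent $z_i$'s, which does not affect the argument). Consequently, for $\theta=0,1$,
\[
\mathbb{P}^\theta(x_n=j\mid \mathbf{v}_n=\mathbf{u})=\mathbb{P}^\theta(s_n\in A_{\mathbf{u},j}\mid\mathbf{v}_n=\mathbf{u})=\mathbb{F}_\theta(A_{\mathbf{u},j}).
\]

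Finally I would apply the BLR assumption. Writing the $\mathbb{F}_0$-measure of $A_{\mathbf{u},j}$ as an integral against $\mathbb{F}_1$ via the Radon--Nikodym derivative,
\[
\mathbb{F}_0(A_{\mathbf{u},j})=\int_{A_{\mathbf{u},j}}\frac{d\mathbb{F}_0}{d\mathbb{F}_1}(s)\,d\mathbb{F}_1(s),
\]
and using $m<\frac{d\mathbb{F}_0}{d\mathbb{F}_1}(s)<M$ almost everywhere, I get
\[
m\,\mathbb{F}_1(A_{\mathbf{u},j})<\mathbb{F}_0(A_{\mathbf{u},j})<M\,\mathbb{F}_1(A_{\mathbf{u},j}),
\]
with the strict inequalities holding in the nondegenerate case $\mathbb{F}_1(A_{\mathbf{u},j})>0$ (which is what is needed; if this measure is zero, both sides of the lemma's inequality are zero and the statement reduces to an equality, which I would remark is an innocuous exception or rule out by restricting to the relevant conditioning events). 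Substituting the identification from the previous paragraph then yields the claimed two-sided bound.

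There is really no main obstacle here; the only point one has to be slightly careful about is the conditional-independence step, which justifies replacing the conditional probability of $\{s_n\in A_{\mathbf{u},j}\}$ given $\mathbf{v}_n=\mathbf{u}$ by the unconditional $\mathbb{F}_\theta$-measure of $A_{\mathbf{u},j}$, and the mild caveat about strictness when the preimage has $\mathbb{F}_1$-measure zero.
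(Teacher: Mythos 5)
Your proof is correct, and it is precisely the argument the paper has in mind: the paper omits the proof of Lemma \ref{lem:balancing} as ``a simple consequence of the BLR assumption,'' and your route --- reducing $\mathbb{P}^\theta(x_n=j\mid \mathbf{v}_n=\mathbf{u})$ to $\mathbb{F}_\theta(A_{\mathbf{u},j})$ via the conditional independence of $s_n$ and $\mathbf{v}_n$ given $\theta$, then integrating the Radon--Nikodym bound over $A_{\mathbf{u},j}$ --- is exactly that consequence. Your two caveats (conditioning events of positive probability, which by mutual absolute continuity vanish under one measure iff they vanish under the other, and the degenerate case $\mathbb{F}_1(A_{\mathbf{u},j})=0$, where the stated strict inequalities collapse to equalities but this is harmless everywhere the lemma is invoked, e.g.\ in \eqref{eq:balancing2}) are appropriate and, if anything, more careful than the paper's implicit treatment.
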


Lemma \ref{lem:balancing} states that (under the BLR assumption) if under one state of the world some agent $n$, after observing $\mathbf{u}$, decides $0$  with positive probability, then the same must be true with proportional probability under the other state of the world. This  proportional dependence of decision probabilities for the two  possible underlying states is central to the proof of Theorem~$1$.

Before proceeding with the main part of the proof, we need two more lemmata. Consider a probability space $(\Omega,\mathcal{F},\Pr)$ and a sequence of events $\{E_k\}$, $k=1,2, \ldots $. The upper limiting set of the sequence, $\limsup_{k \rightarrow \infty} E_k$, is defined by 
\[
\limsup_{k \rightarrow \infty} E_k = \mcap_{n=1}^{\infty} \mcup_{k=n}^\infty E_k.
\]
{(This is the event that infinitely many of the $E_k$ occur.)} 
We will use a variation of the Borel-Cantelli lemma (Corollary 6.1 in \cite{Durett}) that does not require independence of events.

\begin{lemma}\label{lem:condbc}
If 
\[
\sum_{k=1}^{\infty}\Pr(E_k\mid E_1' \ldots E_{k-1}')=\infty,  
\]
then,
\[
\Pr\left(\limsup_{k \rightarrow \infty} E_k\right )=1,
\]
where $E'_k$ denotes the complement of $E_k$.
\end{lemma}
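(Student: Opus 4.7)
The plan is to argue via the complementary event. Since $\limsup_k E_k$ is the event that infinitely many $E_k$ occur, its complement equals $\bigcup_{n=1}^\infty \bigcap_{k=n}^\infty E_k'$, so by countable subadditivity it suffices to prove $\Pr(\bigcap_{k=n}^\infty E_k')=0$ for every $n\ge 1$.

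Fix such an $n$. For any $N\ge n$, the multiplication rule for conditional probabilities together with the elementary inequality $1-x\le e^{-x}$ yields
\[
\Pr\!\bigl(\textstyle\bigcap_{k=n}^{N} E_k'\bigr) \;=\; \prod_{k=n}^N \Bigl(1-\Pr\bigl(E_k \bigm| \textstyle\bigcap_{j=n}^{k-1} E_j'\bigr)\Bigr) \;\le\; \exp\!\Bigl(-\textstyle\sum_{k=n}^N \Pr\bigl(E_k \bigm| \textstyle\bigcap_{j=n}^{k-1} E_j'\bigr)\Bigr),
\]
with the convention that the $k=n$ factor is simply $1-\Pr(E_n)$. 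Sending $N\to\infty$ and invoking continuity of measure on the left, the desired identity $\Pr(\bigcap_{k=n}^\infty E_k')=0$ follows as soon as the tail series $\sum_{k=n}^\infty \Pr(E_k\mid \bigcap_{j=n}^{k-1} E_j')$ diverges.

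The main obstacle is therefore to derive divergence of this tail series from the hypothesis, which is phrased with conditioning on $E_1' \cdots E_{k-1}'$ starting at index $1$ rather than at the shifted index $n$. I would handle this by working under the conditional probability measure $\Pr(\,\cdot\,\mid\bigcap_{j<n} E_j')$ whenever $\Pr(\bigcap_{j<n} E_j') > 0$: the two conditionings then coincide along the tail, and divergence of the given series (which differs only by the finite prefix $\sum_{k<n}$) transfers immediately. The residual, degenerate case $\Pr(\bigcap_{j<n} E_j')=0$ can be absorbed by an inductive reduction appealing to the filtration-based form of the conditional Borel-Cantelli lemma in the cited Durrett reference. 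Beyond this bookkeeping, the argument is a direct combination of the chain rule, the exponential inequality $1-x\le e^{-x}$, and continuity of measure.
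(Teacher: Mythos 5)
There is a genuine gap, and it occurs exactly at the step you flag as the ``main obstacle.'' Your chain-rule bound correctly reduces the problem to showing that the \emph{shifted} series $\sum_{k\ge n}\Pr\bigl(E_k\mid\bigcap_{j=n}^{k-1}E_j'\bigr)$ diverges for every $n$, but this does not follow from the hypothesis. Passing to the conditional measure $Q(\cdot)=\Pr\bigl(\cdot\mid\bigcap_{j<n}E_j'\bigr)$ does make the two conditionings agree along the tail, but the conclusion you then obtain is $Q\bigl(\bigcap_{k\ge n}E_k'\bigr)=0$, which is just the statement $\Pr\bigl(\bigcap_{k\ge 1}E_k'\bigr)=0$, i.e., $\Pr(\bigcup_k E_k)=1$. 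It says nothing about the portion of $\bigcap_{k\ge n}E_k'$ that lies inside $\bigcup_{j<n}E_j$: on the event that some early $E_j$ has already occurred, the hypothesis places no constraint on what happens afterwards.

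The gap cannot be repaired, because the lemma as stated is false. Take $\Omega=[0,1]$ with Lebesgue measure and $E_k=[1-2^{1-k},\,1-2^{-k})$. Then $\bigcap_{j<k}E_j'=[1-2^{1-k},1]$ has measure $2^{1-k}$, while $E_k$ is contained in it with measure $2^{-k}$, so $\Pr\bigl(E_k\mid E_1'\cdots E_{k-1}'\bigr)=1/2$ for every $k$ and the series diverges; yet the $E_k$ are pairwise disjoint, so $\limsup_{k}E_k=\emptyset$. (Concretely, for $n=2$ your tail intersection $\bigcap_{k\ge 2}E_k'$ contains $E_1=[0,1/2)$ and has probability at least $1/2$, and the shifted series $\sum_{k\ge 2}\Pr\bigl(E_k\mid\bigcap_{j=2}^{k-1}E_j'\bigr)$ converges.) The correct conclusion under the stated hypothesis is only $\Pr(\bigcup_k E_k)=1$, which is the form in which the cited corollary of the conditional Borel--Cantelli lemma is stated (the paper offers no proof, only the citation to \cite{Durett}), and it is also the only consequence actually used in the proof of Theorem~\ref{thm:almsure}, where the lemma serves to show that $\{x_k\ne u_k\ \text{for some }k\}$ has probability one. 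Your argument restricted to $n=1$ is a correct proof of that weaker, correct version; the $\limsup$ in the statement should be read as (or corrected to) a union.
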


\noindent Finally, we will use  the following algebraic fact.
\begin{lemma}\label{l:ineq}
Consider a sequence $\{q_n\}_{n \in \mathbb{N}}$ of  real numbers, with $q_n \in [0,1]$, for all $n \in \mathbb{N}$.  Then,
\[
1-\sum_{n \in V} q_n \leq \prod_{n \in V} (1-q_n)\leq e^{-\sum_{n \in V}q_n},
\]
for any $V\subseteq \mathbb{N}$. 
\end{lemma}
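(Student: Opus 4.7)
The plan is to handle the two inequalities separately, with the right one being essentially a one-liner and the left one requiring a short induction.

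For the upper bound $\prod_{n\in V}(1-q_n)\le e^{-\sum_{n\in V}q_n}$, I would invoke the elementary inequality $1-x\le e^{-x}$, valid for all real $x$ (and in particular for $x=q_n\in[0,1]$). Applying this factor by factor and multiplying yields $\prod_{n\in V}(1-q_n)\le \prod_{n\in V}e^{-q_n}=e^{-\sum_{n\in V}q_n}$, which is legitimate since all factors on the left are nonnegative. If $V$ is infinite this should be read as the limit of the analogous inequality for finite truncations $V_N=V\cap\{1,\dots,N\}$.

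For the lower bound $1-\sum_{n\in V}q_n\le\prod_{n\in V}(1-q_n)$, I would first treat the case of finite $V=\{n_1,\dots,n_k\}$ by induction on $k$. The base case $k=1$ is trivial. For the inductive step, set $P_k=\prod_{i=1}^k(1-q_{n_i})$ and $S_k=\sum_{i=1}^k q_{n_i}$ and distinguish two cases: if $1-S_k\ge 0$, then the inductive hypothesis gives $P_k\ge 1-S_k\ge 0$, and multiplying by $(1-q_{n_{k+1}})\ge 0$ yields $P_{k+1}\ge(1-q_{n_{k+1}})(1-S_k)=1-S_{k+1}+q_{n_{k+1}}S_k\ge 1-S_{k+1}$; if $1-S_k<0$, then $1-S_{k+1}<0\le P_{k+1}$ and the inequality is automatic. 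The case distinction is the only mildly subtle point since the inductive hypothesis itself could be vacuous when the partial sum already exceeds one.

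To extend the lower bound to an infinite $V$, I would take monotone limits over finite subsets $V_N\uparrow V$: the finite products $\prod_{n\in V_N}(1-q_n)$ form a nonincreasing sequence in $[0,1]$ and therefore converge to $\prod_{n\in V}(1-q_n)$, while the partial sums increase to $\sum_{n\in V}q_n$. If $\sum_{n\in V}q_n\le 1$, passing to the limit in the finite-$V_N$ inequality preserves it; if $\sum_{n\in V}q_n>1$, the right-hand side is nonnegative while the left is negative, so the bound holds trivially. I expect no real obstacle here; the only point that requires any care is the sign issue in the induction just described.
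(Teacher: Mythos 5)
Your proof is correct, but it takes a different route from the paper's for the first inequality. The paper interprets the $q_n$ as probabilities of independent events $E_n$, writes $\Pr(\mcap_{n\in V}E_n') = \prod_{n\in V}(1-q_n)$ and $\Pr(\mcup_{n\in V}E_n) \le \sum_{n\in V}q_n$ by the (countable) union bound, and combines these with $\Pr(\mcup E_n)+\Pr(\mcap E_n')=1$; this handles finite and countably infinite $V$ in one stroke, with the sign issue absorbed into the nonnegativity of probabilities. Your argument is a direct induction with an explicit case split on the sign of $1-S_k$, followed by a monotone-limit passage to infinite $V$. What your version buys is self-containedness --- no auxiliary probability space is needed, and the one genuinely delicate point (the inductive hypothesis becoming vacuous once the partial sum exceeds one) is made explicit rather than hidden. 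What the paper's version buys is brevity and a uniform treatment of infinite $V$ without a separate limiting step. Both are complete; your handling of the upper bound via $1-x\le e^{-x}$ matches what the paper dismisses as standard.
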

\begin{proof} The second inequality is standard. For the first one, interpret the numbers  $\{q_n\}_{n \in \mathbb{N}}$ as probabilities of  independent events $\{E_n\}_{n \in \mathbb{N}}$. Then, clearly,
\[
\Pr(\mcup_{n \in V} E_n)+\Pr(\mcap_{n \in V} E'_n)=1.
\]
Observe that 
\[
\Pr(\mcap_{n \in V} E'_n)=\prod_{n \in V} (1-q_n),
\]
and by the union bound,
\[
\Pr(\mcup_{n \in V} E_n)\leq \sum_{n \in V} q_n.
\]
 Combining the above yields the desired result.
\end{proof}

\noindent We are now ready to prove the main result of this section.

\begin{proof}[Proof of Theorem \ref{thm:almsure}]
Let $U$ denote the set of all binary sequences with a finite number of zeros (equivalently, the set of binary sequences that converge to one). 
Suppose, to derive a contradiction, that we have almost sure learning. Then, 
$\Pr^1(\textbf{x} \in U)=1$. 
The set $U$ is easily seen to be countable, which implies that there exists an infinite binary sequence
$\mathbf{u}=\{u_n\}_{n\in \mathbb{N}}$ such that
$\Pr^1(\mathbf{x}=\mathbf{u})>0$.
In particular,
\[
\Pr^1(x_k={u}_k, \text{  for all  } k< {n})>0,  \ \ \ \text{   for all    }  n \in \mathbb{N}. \]
Since $(x_1, x_2,\ldots, x_n)$ is determined by $(s_1, s_2, \ldots, s_n)$ and since the distributions of $(s_1, s_2, \ldots, s_n)$ under the two hypotheses are absolutely continuous with respect to each other, it follows that 
\begin{equation}
\Pr^0(x_k={u}_k, \text{  for all  } k\leq {n})>0,  \ \ \ \text{   for all    }  n \in \mathbb{N} \label{eq:posit2}.
\end{equation}
We define 
\begin{equation*}
a_n^0=\mathbb{P}^0(x_n \neq u_n \mid x_k=u_k,  \text{   for all    } k < n ),
\end{equation*}
\begin{equation*}
a_n^1=\mathbb{P}^1(x_n \neq u_n \mid x_k=u_k,  \text{   for all    } k < n ).
\end{equation*}
Lemma \ref{lem:balancing} implies that 
\begin{equation}
m{a_n^1}< {{a}_n^0}< M{{a}_n^1}, \label{eq:balancing2}
\end{equation}
because for $j \in \{0,1\}, $ $\mathbb{P}^j(x_n \neq u_n \mid x_k=u_k,  \text{   for all    } k < n )=\mathbb{P}^j(x_n \neq u_n \mid x_k=u_k, \text{  for  } k=n-K,\ldots,n-1)$.

Suppose that 
\[
 \sum_{n=1}^\infty{a}_n^1 = \infty.
 \]
Then, Lemma \ref{lem:condbc}, with the identification $E_k=\{x_k\neq u_k\}$,  implies 
that the event $\{x_k\neq u_k,\ \text{ for some }k\}$ has probability 1, under $\Pr^1$.
Therefore, $\Pr^1(\mathbf{x}=\mathbf{u})=0$, which contradicts the definition of $\mathbf{u}$.

Suppose now that
$
\sum_{n=1}^\infty{a}_n^1< \infty.
$
Then,
\[
\sum_{n=1}^\infty {a}_n^0 <M\cdot \sum_{n=1}^\infty {a}_n^1< \infty,
\]
and
\begin{align*}
\lim_{N\rightarrow \infty} \sum_{n=N}^\infty & \mathbb{P}^0(x_n \neq u_n \mid x_k=u_k,  \, \text{   for all    }   k < n )\\
&=\lim_{N\rightarrow \infty} \sum_{n=N}^\infty {a}_n ^0 =0.
 \label{eq:zerosum}
\end{align*}
Choose some $\hat{N}$ such that 
\[
\sum_{n=\hat{N}}^\infty \mathbb{P}^0(x_n \neq u_n \mid x_k=u_k,  \text{   for all    } k < n )<\frac{1}{2}.
\]
 Then, 
\begin{align*}
\Pr^0&(\mathbf{x}= \mathbf{u})=\Pr^0(x_k=u_k, \text{  for all   } k< \hat{N}) \\
& \cdot \prod_{n=\hat{N}}^{\infty}(1-\mathbb{P}^0(x_n \neq u_n \mid x_k=u_k,  \text{   for all    } k < n )).
\end{align*}
The first term on the right-hand side is positive  by (\ref{eq:posit2}), while 
\begin{align*}
\prod_{n=\hat{N}}^{\infty}&(1-\mathbb{P}^0(x_n \neq u_n \mid x_k=u_k, \text{   for all    } k < n ))\\
 &\geq1-\sum_{n=\hat{N}}^{\infty}\mathbb{P}^0(x_n \neq u_n \mid x_k=u_k,  \text{   for all    } k < n ) > \frac{1}{2}.
\end{align*}
Combining the above, we obtain $\Pr^0(\mathbf{x}= \mathbf{u})>0$ and  
\[
\liminf_{n\to\infty} \Pr^0(x_n=1) \geq \Pr^0(\mathbf{x}= \mathbf{u})>0,
\]
which contradicts almost sure learning and completes the proof.
\end{proof}

Given Theorem~\ref{thm:almsure}, in the rest of the paper we concentrate exclusively on the weaker notion of learning in probability, as defined in Section \ref{sec:bu}.

\section{No learning in  probability when $K=1$}\label{sec:noinprob}
In this section, we consider
the case where $K=1$,  so that each agent only observes the decision of its immediate predecessor.
Our main result, stated next, shows that learning in probability is not possible.

\begin{theorem}\label{thm:nolearningwithone}
If $K=1$, there exists no  decision profile  that achieves learning in probability.
\end{theorem}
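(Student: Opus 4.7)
The plan is to view $\{x_n\}$ conditional on $\theta=j$ as a time-inhomogeneous Markov chain on $\{0,1\}$ and derive a contradiction from the assumption of learning in probability. Write $\alpha_n^j=\mathbb{P}^j(x_n=1\mid x_{n-1}=0)$, $\beta_n^j=\mathbb{P}^j(x_n=0\mid x_{n-1}=1)$, $\gamma_n^j=\alpha_n^j+\beta_n^j$, $r_n^j=\alpha_n^j/\gamma_n^j$ (the instantaneous stationary probability of state $1$), and $p_n^j=\mathbb{P}^j(x_n=1)$. Conditioning on $x_{n-1}$ yields the affine recursion $p_n^j-p_{n-1}^j=\gamma_n^j(r_n^j-p_{n-1}^j)$. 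Lemma~\ref{lem:balancing} applied to each transition kernel gives $\alpha_n^0/\alpha_n^1,\beta_n^0/\beta_n^1\in(m,M)$, and hence $\gamma_n^0/\gamma_n^1\in(m,M)$ and $r_n^0/r_n^1\in(m/M,M/m)$.

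Assuming for contradiction that $p_n^0\to 0$ and $p_n^1\to 1$, bounding $\alpha_n^0(1-p_{n-1}^0)\le p_n^0\to 0$ forces $\alpha_n^0\to 0$; the symmetric argument under $\theta=1$ gives $\beta_n^1\to 0$; Lemma~\ref{lem:balancing} transfers these to $\alpha_n^1,\beta_n^0\to 0$. I would then split on whether $\sum_n\gamma_n^0<\infty$ (equivalently $\sum_n\gamma_n^1<\infty$, by the BLR comparison). If yes, Borel--Cantelli applied to the transition events $\{x_n\neq x_{n-1}\}$ under each $\mathbb{P}^j$ shows the chain changes state only finitely often almost surely, so $x_n$ converges almost surely under $\mathbb{P}$ to some $x_\infty$; combined with $x_n\to\theta$ in probability, this forces $x_\infty=\theta$ almost surely, contradicting Theorem~\ref{thm:almsure}.

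In the divergent case, telescoping the recursion gives $\sum_{k=1}^N\gamma_k^j(r_k^j-p_{k-1}^j)=p_N^j-p_0^j$, which is bounded uniformly in $N$. Dividing by $\sum_{k=1}^N\gamma_k^j\to\infty$, and using the weighted Ces\`aro (Toeplitz) theorem on the convergent sequence $p_{k-1}^j\to p^*_j$ with $p^*_0=0$ and $p^*_1=1$, one obtains
\[
\frac{\sum_{k=1}^N \gamma_k^0 r_k^0}{\sum_{k=1}^N \gamma_k^0}\longrightarrow 0
\qquad\text{and}\qquad
\frac{\sum_{k=1}^N \gamma_k^1 r_k^1}{\sum_{k=1}^N \gamma_k^1}\longrightarrow 1.
\]
Chaining the termwise BLR inequalities $r_k^0\ge(m/M)\,r_k^1$, $\gamma_k^0\ge m\gamma_k^1$, and $\sum_k\gamma_k^0\le M\sum_k\gamma_k^1$ yields
\[
\frac{\sum_{k=1}^N \gamma_k^0 r_k^0}{\sum_{k=1}^N \gamma_k^0} \;\ge\; \frac{m^2}{M^2}\cdot\frac{\sum_{k=1}^N \gamma_k^1 r_k^1}{\sum_{k=1}^N \gamma_k^1} \longrightarrow \frac{m^2}{M^2}>0,
\]
which contradicts the preceding limit.

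The main obstacle is the divergent case: one must show that when $\gamma_n^j$ is not summable, the marginal $p_n^j$ forces the instantaneous equilibrium $r_n^j$ into a prescribed weighted-average limit ($0$ under $\theta=0$, $1$ under $\theta=1$). Once this Ces\`aro-style tracking is in place, the BLR comparison between the two hypotheses delivers the contradiction cleanly through the universal constants $m^2/M^2$, with no need for any delicate pointwise control on the sequences $\alpha_n^j,\beta_n^j$.
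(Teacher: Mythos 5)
Your argument is correct, and while its setup coincides with the paper's (two coupled inhomogeneous two-state Markov chains, the BLR comparison of their transition probabilities, and a dichotomy on the summability of those probabilities, with the summable case disposed of exactly as in Lemma~\ref{lem:sum} via almost-sure convergence and Theorem~\ref{thm:almsure}), your treatment of the divergent case is genuinely different. The paper first proves $a_n^{01}\to 0$, partitions the agents into blocks $A_k$ calibrated so that $\sum_{n\in A_k}a_n^{01}\in[m/2,\,2m/3]$, and runs a two-case analysis on $\sum_{n\in A_k}a_n^{10}$ within each block to extract an explicit uniform error floor, Eq.~(\ref{eq:result}). You instead exploit the exact affine recursion $p_n^j-p_{n-1}^j=\gamma_n^j(r_n^j-p_{n-1}^j)$, telescope, and apply a weighted Ces\`aro/Toeplitz argument to force the $\gamma^j$-weighted averages of $r_k^j$ to converge to $0$ under $\theta=0$ and to $1$ under $\theta=1$, which the termwise BLR comparison (the two weighted averages differ by at most a factor $M^2/m^2$) forbids. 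Your route avoids the block construction entirely and is arguably cleaner --- indeed your preliminary observation that $\alpha_n^0,\beta_n^1\to 0$ is never actually used --- while the paper's route buys an explicit quantitative lower bound on $\liminf_n\mathbb{P}(x_n\neq\theta)$ in terms of $m$ and $M$. Two minor points of hygiene: $r_n^j$ is undefined when $\gamma_n^j=0$, but by BLR this occurs for both $j$ simultaneously and such terms contribute nothing to either sum, so any convention works; and the conditional probabilities are well defined for all sufficiently large $n$ because mutual absolute continuity of $\mathbb{F}_0$ and $\mathbb{F}_1$ makes $\mathbb{P}^0(x_{n-1}=i)>0$ equivalent to $\mathbb{P}^1(x_{n-1}=i)>0$, and the learning hypothesis guarantees both states of the chain eventually have positive probability under each hypothesis.
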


We fix a decision profile and use a Markov chain to represent the evolution of the decision process under  a particular state of the world. In particular, we consider a two-state Markov chain whose state is the observed decision $x_{n-1}$. A transition from state $i$ to state $j$ for the Markov chain associated with $\theta=l$, where $i,j,l \in \{0,1\},$ corresponds   to agent $n$ taking the decision $j$ given that its immediate predecessor $n-1$  decided $i$, under the state $\theta=l$. The Markov property is satisfied because the decision $x_n$, conditional on the immediate predecessor's decision, is determined by $s_n$ and  hence is {(conditionally)} independent from the history of previous decisions.
Since a decision profile $d$ is fixed, we can again suppress $d$ from our notation and define the transition probabilities of the two chains by
\begin{eqnarray}
a_n^{ij}&=&\mathbb{P}^0(x_n=j \mid x_{n-1} = i )\\
{\bar{a}}_n^{ij}&=&\mathbb{P}^1(x_n=j \mid x_{n-1} = i),
\end{eqnarray}
where $ i,j \in \{0,1\}$.
The two chains are illustrated in
 Fig.~\ref{fig:markovchains}.
Note that in the current context,  and similar to Lemma \ref{def:BLR}, the BLR assumption yields the inequalities
 \begin{equation}\label{cor:blrlemma}
 m\cdot {{\bar{a}}_n^{ij}}<{a_n^{ij}}< M\cdot{{\bar{a}}_n^{ij}}, 
 \end{equation}
where $ i,j \in \{0,1\}$, and $m>0$, $M< \infty$, are as in Definition~\ref{def:BLR}.

\begin{figure}
\centering
\includegraphics[width=4cm]{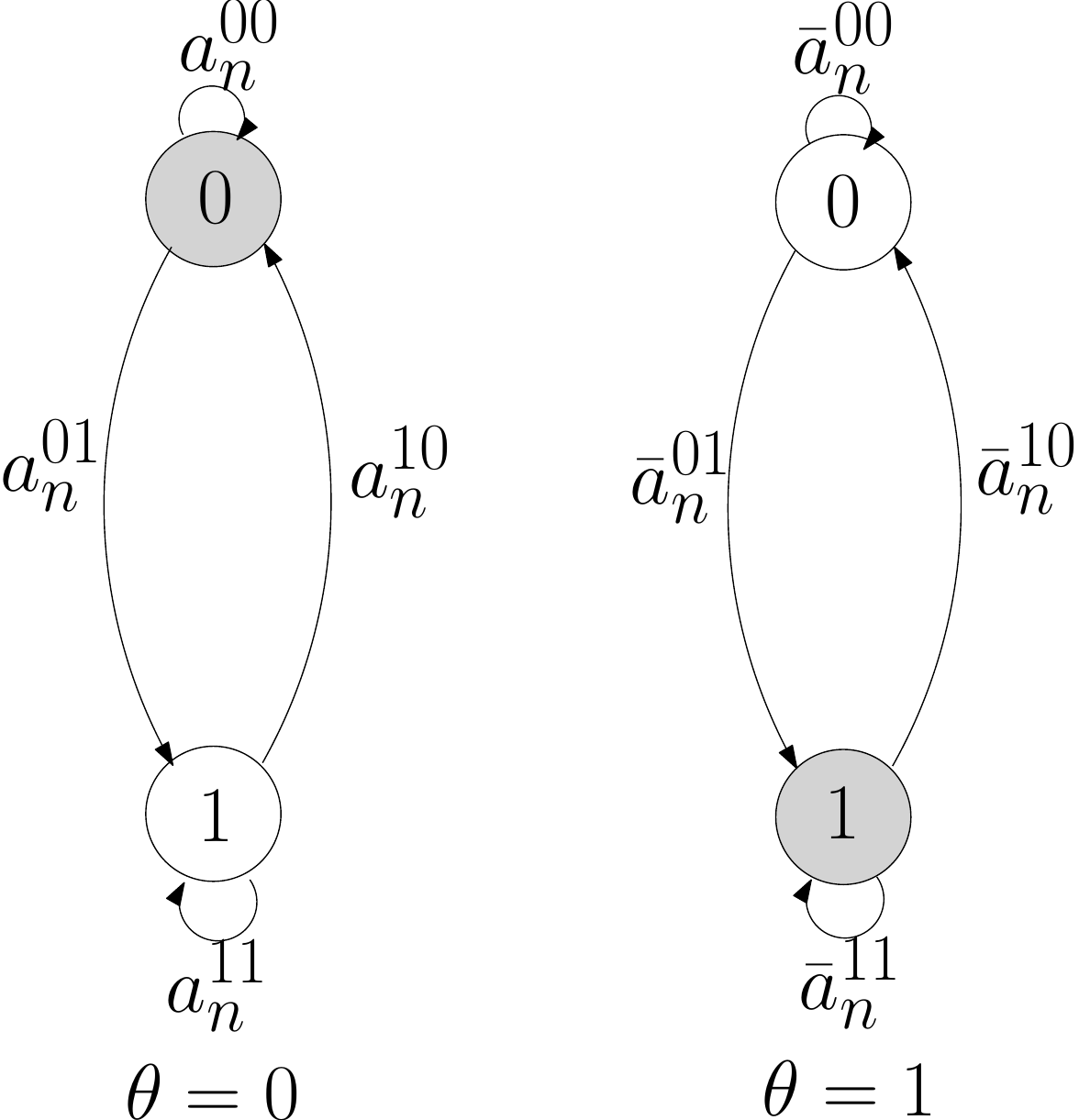}
\caption{The Markov chains that model the decision process for $K=1$. States represent observed decisions. {The} transition probabilities under $\theta=0$ or $\theta=1$ are given by $a_n^{ij}$ and $\bar{a}_n^{ij}$, respectively. If learning in probability is to occur, the probability mass needs to become concentrated on the highlighted state.}\label{fig:markovchains}

\end{figure}

%%%%
\old{
The next lemma, which    follows   directly  from the BLR property, couples the probabilities of the states of the Markov chains associated with the two states of the world, after a finite number of transitions.

\begin{lemma}\label{lem:finitetrans1}
For any $j \in \{0,1\}$ and $n \in \mathbb{N}$, 
\begin{equation}
m^n\cdot {\Pr^1(x_n=j)}\leq {\Pr^0(x_n=j)}\leq M^n \cdot{\Pr^1(x_n=j)}. \label{eq:finitetrans}
\end{equation}
\end{lemma}
\begin{proof} 
This is a straightforward consequence of the facts that $x_n$ is determined by $(s_1, s_2, \ldots, s_n)$ and that the likelihood ratio $(d\mathbb{P}^0/d\mathbb{P}^1)(s_1,\ldots,s_n)$ lies between $m^n$ and $M^n$.
\end{proof}

We next establish that for learning in probability to occur, transitions between different states cannot stop at some finite $n$. The intuitive reason is  as follows. If transitions can stop in favor of one of the states after some finite time under one state of the world, then, the same can happen under the other state of the world as well which would contradict learning in probability.  The next lemma formalizes this intuition.

\begin{lemma} \label{lem:finitetrans}
 Suppose that there is learning in probability. Let $A_{ij}=\{n: a_n^{ij}>0\}$ and  $\bar{A}_{ij}=\{n: {\bar{a}}_n^{ij}>0\}$. Then,  $|A_{ij}|=|\bar{A}_{ij}|=\infty,\, \text{  for all  }i,j \in\{0,1\}, \text{  with } i \neq j,$  where $|\cdot|$ denotes the cardinality of a set. 
\end{lemma}

\begin{proof}
The first equality is straightforward from \eqref{cor:blrlemma}. For the second equality, we only  consider  the case where  $|\bar{A}_{01}|<\infty$. (The argument for the case where $|\bar{A}_{10}|<\infty$.  is entirely symmetrical.) Let $\hat n$ be such that
$\bar{a}_n^{01}=0$ for $n\geq \hat{n}$. By taking $\hat n$ large enough and using the assumption of learning in probability, we can assume that
$\Pr^0(x_{\hat n -1}=0)>0$. By Lemma \ref{lem:finitetrans1}, we obtain
$\Pr^1(x_{\hat n -1}=0)>0$. Because of the future probabilities $\bar{a}_n^{01}$ of transitioning from state $0$ to state $1$ are zero, it follows that
$\Pr^1(x_n=0)=\Pr^1(x_{\hat n -1}=0)$, which does not converge to zero, contradicting learning in probability.
\end{proof}

We next derive
some properties of the transition probabilities under the the assumption of learning in probability.  The first result strengthens Lemma \ref{lem:finitetrans} and proves that transitions between states occur infinitely many times. }

%%%%%

We now establish a further relation between the transition probabilities under the two states of the world.

 \begin{lemma}\label{lem:sum} 
If we have learning in probability, then
  \begin{equation}
  \sum_{n=1}^{\infty} a_n^{01}=\infty,
  \end{equation}
  and 
    \begin{equation}
      \sum_{n=1}^{\infty} a_n^{10}=\infty.
      \end{equation}
  \end{lemma}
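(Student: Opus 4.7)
My plan is to argue by contradiction for each of the two sums separately. The first identity is handled directly under $\theta=0$, while the second is obtained by first running the symmetric argument under $\theta=1$ to get $\sum_n \bar{a}_n^{01}=\infty$, and then transferring the conclusion back to $\theta=0$ via the BLR coupling in~\eqref{cor:blrlemma}.

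For $\sum_n a_n^{10}=\infty$, I would suppose $\sum_n a_n^{10}<\infty$ and pick $\hat N$ large enough that simultaneously $\sum_{n>\hat N} a_n^{10}<1/2$ and $\mathbb{P}^0(x_{\hat N}=1)>0$. The latter positivity is the only subtle ingredient: learning in probability gives $\mathbb{P}^1(x_{\hat N}=1)\to 1$, and because $x_{\hat N}$ is a function of $(s_1,\ldots,s_{\hat N})$ whose laws under the two hypotheses are mutually absolutely continuous, positivity transfers to $\mathbb{P}^0$. The Markov property --- valid here because $K=1$ and the $s_n$ are conditionally independent given $\theta$ --- then lets me chain the one-step survival probabilities $1-a_n^{10}$ to get
\[
\mathbb{P}^0\bigl(x_k=1\text{ for all }k\geq \hat N\bigr)=\mathbb{P}^0(x_{\hat N}=1)\prod_{n>\hat N}\bigl(1-a_n^{10}\bigr)>\tfrac{1}{2}\mathbb{P}^0(x_{\hat N}=1)>0,
\]
where the middle inequality uses Lemma~\ref{l:ineq}. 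Since this event is contained in $\{x_k=1\}$ for every $k\geq \hat N$, the display lower-bounds $\liminf_k \mathbb{P}^0(x_k=1)$ by a strictly positive constant, contradicting $\mathbb{P}^0(x_k=1)\to 0$.

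For $\sum_n a_n^{01}=\infty$, the identical scheme applied under $\theta=1$ to the transition $0\to 1$ produces $\sum_n \bar a_n^{01}=\infty$: finiteness would instead yield $\mathbb{P}^1(x_k=0\text{ for all }k\geq \hat N)>0$ by the same Markov chaining, contradicting $\mathbb{P}^1(x_k=0)\to 0$. The BLR inequality $a_n^{01}>m\,\bar a_n^{01}$ from~\eqref{cor:blrlemma} then closes the proof: $\sum_n a_n^{01}\geq m\sum_n \bar a_n^{01}=\infty$.

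The step I expect to need the most care is the simultaneous choice of $\hat N$: the smallness of the tail sum and the positivity of $\mathbb{P}^0(x_{\hat N}=1)$ come from different sources (summability on the one hand, mutual absolute continuity of the signal laws on the other), and both must hold at the same index. Once that is in place, the remainder is a short conditional-Borel--Cantelli-style computation that leverages Lemma~\ref{l:ineq} and the one-step memory afforded by $K=1$.
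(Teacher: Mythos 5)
Your proof is correct, but it takes a genuinely different route from the paper's. The paper argues indirectly: if $\sum_n a_n^{01}<\infty$, then by \eqref{cor:blrlemma} also $\sum_n \bar a_n^{01}<\infty$, so the expected --- hence almost surely finite --- number of $0\to1$ transitions is finite under either hypothesis; a binary sequence with finitely many upcrossings converges, so $\{x_n\}$ converges almost surely, which together with learning in probability would give almost sure learning and contradict Theorem~\ref{thm:almsure}. You instead give a direct quantitative argument: a finite tail sum of the exit probabilities from the ``wrong'' state, combined with positivity of the probability of occupying that state at a large time $\hat N$ (transferred across hypotheses via mutual absolute continuity of the signal laws) and the product bound of Lemma~\ref{l:ineq}, yields a uniformly positive probability of absorption in the wrong state, contradicting learning in probability with no appeal to Theorem~\ref{thm:almsure}. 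Your handling of the $0\to1$ sum is also right to route through $\theta=1$ and return via $a_n^{01}>m\,\bar a_n^{01}$, since being trapped at state $0$ under $\theta=0$ is not by itself a contradiction. What each approach buys: yours is self-contained and reuses exactly the machinery (Markov chaining of one-step survival probabilities plus Lemma~\ref{l:ineq}) that the paper deploys anyway in the main proof of Theorem~\ref{thm:nolearningwithone}; the paper's is shorter but leans on the almost-sure impossibility result. The only point worth making explicit in your write-up is that the conditional probabilities $a_n^{10}$ along the chain are well defined, because $\Pr^0(x_{n-1}=1)$ remains positive inductively once each factor $1-a_n^{10}$ exceeds $1/2$.
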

  
\begin{proof}
For the sake of contradiction, assume that  $\sum_{n=1}^{\infty} a_n^{01}<\infty$. By Eq.~\ref{cor:blrlemma}, we also have  $\sum_{n=1}^{\infty} \bar{a}_n^{01}<\infty$. Then, the expected number of transitions from state $0$ to state $1$ is finite under either state of the world. In particular the (random) number of such transitions is finite, almost surely. This can only happen if $\{x_n\}_{n=1}^\infty$ converges almost surely. However, almost sure convergence together with learning in probability would imply almost sure learning, which would contradict Theorem~\ref{thm:almsure}. 
The proof of the second statement in the lemma is similar. 
\end{proof}

The next lemma states that if we have  learning in probability, then the transition probabilities between different states should converge to zero. 

\begin{lemma}\label{lem:convzero} 
If we have learning in probability, then
  \begin{equation}
  \lim_{n \rightarrow \infty} a_n^{01} =0.
  \end{equation}
  \end{lemma}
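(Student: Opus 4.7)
The plan is to extract the conclusion directly from the definition of learning in probability under the state $\theta=0$, together with a simple decomposition of the marginal probability $\mathbb{P}^0(x_n=1)$ over the possible values of $x_{n-1}$. Specifically, learning in probability means $\mathbb{P}(x_n=\theta)\to 1$, which, after conditioning on $\theta=0$, yields
\[
\lim_{n\to\infty}\mathbb{P}^0(x_n=0)=1, \qquad \lim_{n\to\infty}\mathbb{P}^0(x_n=1)=0.
\]
The same statement applied at time $n-1$ gives $\mathbb{P}^0(x_{n-1}=0)\to 1$.

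The key step is then the elementary lower bound
\[
\mathbb{P}^0(x_n=1)\;\geq\;\mathbb{P}^0(x_n=1,\,x_{n-1}=0)\;=\;a_n^{01}\cdot \mathbb{P}^0(x_{n-1}=0).
\]
The right-hand side is a product of $a_n^{01}$ with a quantity that tends to $1$, while the left-hand side tends to $0$. Dividing by $\mathbb{P}^0(x_{n-1}=0)$ (which is eventually bounded below by, say, $1/2$) gives $a_n^{01}\to 0$, which is exactly the desired conclusion.

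There is no real obstacle here: the statement is a direct consequence of the definition of learning in probability applied to two consecutive time steps, combined with the non-negativity of probabilities in the disjoint decomposition according to $x_{n-1}$. The BLR assumption and the earlier lemmata (such as Lemma~\ref{lem:sum}) are not needed in this particular argument; they will only be required for the subsequent lemmata that build on Lemma~\ref{lem:convzero} to reach the final contradiction of Theorem~\ref{thm:nolearningwithone}. I would not bother stating the analogous symmetric identity $\lim_{n\to\infty}\bar{a}_n^{10}=0$ under $\theta=1$ unless it is needed later, since the lemma as stated concerns only $a_n^{01}$.
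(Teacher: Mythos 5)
Your argument is correct and is essentially the same as the paper's: both rest on the inequality $\mathbb{P}^0(x_n=1)\geq a_n^{01}\cdot\mathbb{P}^0(x_{n-1}=0)$ together with the facts that $\mathbb{P}^0(x_{n-1}=0)$ is eventually at least $1/2$ and $\mathbb{P}^0(x_n=1)\to 0$. The paper merely phrases this as a proof by contradiction rather than as a direct limit computation.
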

  
 \begin{proof}
   Assume, to arrive at a contradiction that there exists some $\epsilon \in (0,1)$ 
   such that 
   $$a_n^{01}=\Pr^0(x_{n}=1 \mid x_{n-1}=0)>\epsilon,$$
   for infinitely many values of $n$. Since we have learning in probability, 
   we also have $\Pr^0(x_{n-1}=0)>1/2$ when $n$ is large enough. This implies that for infinitely many values of $n$,
$$\Pr^0(x_{n}=1)\geq   \Pr^0(x_{n}=1 \mid x_{n-1}=0) \Pr^0(x_{n-1}=0)
\geq \frac{\epsilon}{2}.$$ But this contradicts learning in probability.
\end{proof}

We are now ready to complete the proof of Theorem~\ref{thm:nolearningwithone}, by arguing as follows.  Since the transition probabilities from state $0$ to state $1$ converge to zero, while their sum is infinite, under either state of the world, we can divide the agents (time) into blocks so that  the corresponding sums of the transition probabilities from state 0  to state 1 over each block are approximately constant. If  during such a block the sum  of the  transition  probabilities from state 1 to state 0 is large, then under the state of the world $\theta=1$, there is high probability of starting the block at state 1, moving to state 0, and staying at state 0 until the end of the block. 
If on the other hand  the sum of the transition probabilities from state 1 to state 0 is small, then under state of the world $\theta=0$, there is high probability of starting the  block at state 0, moving to state 1, and staying at state 1 until the end of the block.
Both cases prevent  convergence in probability to the correct decision.

\begin{proof}[Proof of Theorem \ref{thm:nolearningwithone}]
We assume that we have learning in probability and will derive a
 contradiction.  From Lemma~\ref{lem:convzero}, $\lim_{n \rightarrow \infty} a_n^{01}=0$ and therefore there exists a $\hat{N} \in \mathbb{N}$ such that for all $n>\hat{N},$ 
\begin{equation} \label{eq:one}
a_n^{01} < \frac{m}{6}.
\end{equation}
Moreover, by the learning in probability assumption,  there exists some $\tilde{N} \in \mathbb{N}$ such that for all $n>\tilde{N}$,
\begin{equation} \label{eq:two}\mathbb{P}^0(x_n=0)>\frac{1}{2}, \end{equation}
and
\begin{equation} \label{eq:three} \mathbb{P}^1(x_n=1)>\frac{1}{2}.\end{equation}
 \noindent Let $N=\max\{\hat{N},\tilde{N}\}$ so that Eqs. (\ref{eq:one})-(\ref{eq:three}) all hold for
$n>N$, 

We divide the agents (time) into blocks so that in each block the sum of the transition probabilities from state $0$ to state $1$ can be simultaneously bounded from above and below. We define the last agents of each block recursively, as follows: 
\begin{eqnarray*}
r_1&=&{N},\\
r_k&=&\min\left\{l:\sum_{n=r_{k-1}+1}^l a_n^{01}\geq \frac{m}{2}\right\}.
\end{eqnarray*}
From Lemma \ref{lem:sum},   we have  that $\sum_{n=N}^{\infty} a_n^{01}=\infty$. This fact, together with Eq.~(\ref{eq:one}), guarantees that the sequence $r_k$ is well defined and strictly increasing.

Let $A_k$ be the block that ends with agent $r_{k+1}$, i.e.,  $A_k \triangleq\{r_k+1,\ldots, r_{k+1}\}$. The construction of the sequence $ \{r_k\}_{k\in\mathbb{N}}$ yields 
\[
\sum_{n\in A_k} a_n^{01} \geq \frac{m}{2}.
\]
On the other hand, $r_{k+1}$ is the first agent for which the sum is at least  $m/2$ and since, by (\ref{eq:one}),  $a_{r_{k+1}}<m/6$, we get that 
\[
\sum_{n\in A_k} a_n^{01} \leq \frac{m}{2}+\frac{m}{6}=\frac{2m}{3}.
\]
Thus,
\begin{equation}
\frac{m}{2}\leq\sum_{n \in A_k}a_n^{01}  \leq   \frac{2m}{3},
\end{equation}
and
combining with Eq.~\eqref{cor:blrlemma}, we also have
 \begin{equation}
\frac{m}{2M}\leq\sum_{n \in A_k}{\bar{a}}_n^{01}  \leq   \frac{2}{3}, \label{eq:01}
\end{equation}
for all $k$.

We consider  two cases for the sum of transition probabilities from state $1$ to state $0$ during block $A_k$.
 We first assume that 
\[ \sum_{n \in A_k} a_n^{10}>\frac{1}{2}. \]
 Using Eq.~\eqref{cor:blrlemma},  we obtain 
\begin{equation} \label{eq:otherworld}
\sum_{n \in A_k} \bar{a}_n^{10}>\sum_{n \in A_k} \frac{1}{M}\cdot {a}_n^{10}>\frac{1}{2M}.
\end{equation}
The probability of a transition from state $1$ to state $0$  during the block $A_k$, under $\theta=1$, is
\begin{equation*}
\Pr^1\left(\mcup_{n\in A_k} \{x_n=0\}\mid  x_{r_k}=1\right)=1-\prod_{n\in A_k}(1-{\bar{a}}_n^{10})
\end{equation*}

Using Eq.~(\ref{eq:otherworld}) and Lemma \ref{l:ineq}, the product on the right-hand side can be bounded from above, 
\[
\prod_{n\in A_k}(1-{\bar{a}}_n^{10}) \leq e^{-\sum_{n \in A_k} \bar{a}_n^{10}}\leq e^{-{1}/{(2M)}},
\]
which yields
\[
\Pr^1\left( \mcup_{n\in A_k} \{x_n=0\}\mid x_{r_k}=1\right) \geq 1- e^{-{1}/{(2M)}}.
\]
After a transition to state $0$ occurs, the probability of staying at that state until the end of the block is bounded  below as follows:
\[
\Pr^1\left( x_{r_{k+1}}=0\mid \mcup_{n\in A_k} \{x_n=0\}\right) \geq \prod_{n\in A_k}(1-{\bar{a}}_n^{01}).
\]
The right-hand side can be further bounded using Eq.~(\ref{eq:01}) and Lemma~\ref{l:ineq}, as follows:
\[
\prod_{n\in A_k}(1-{\bar{a}}_n^{01})\geq 1-\sum_{n \in A_k} \bar{a}_n^{01}\geq \frac{1}{3}.
\]
Combining the above and using (\ref{eq:three}), we conclude that 
\begin{align*}
\Pr^1(x_{r_{k+1}}=0)\geq &\, \Pr^1(x_{r_{k+1}}=0\mid  \mcup_{n\in A_k} \{x_n=0\}) \\
&\cdot \Pr^1(\mcup_{n\in A_k} \{x_n=0\}\mid x_{r_k}=1) \Pr^1(x_{r_{k}}=1)\\
\geq& \frac{1}{3}\cdot\left( 1-e^{-{1}/{(2M)}}\right)  \cdot \frac{1}{2}.
\end{align*}

We now consider the second case and assume that 
\[ \sum_{n \in A_k} a_n^{10}\leq\frac{1}{2}. \]
The probability of a transition from state $0$ to state $1$ during the block $A_k$ is 
\[
\Pr^0 \left(\mcup_{n\in A_k} \{x_n=1\}\mid x_{r_k}=0\right)=1-\prod_{n\in A_k}(1-{{a}}_n^{01}).
\]
The product on the right-hand side can be bounded above using Lemma~\ref{l:ineq},
\[
\prod_{n\in A_k}(1-{{a}}_n^{01}) \leq e^{-\sum_{n \in A_k} {a}_n^{01}}\leq e^{-{m}/{(2M)}},
\]
which yields
\[
\Pr^0\left( \mcup_{n\in A_k} \{x_n=1\}\mid x_{r_k}=0\right) \geq 1- e^{-{m}/{2}}.
\]
After a transition to state $1$ occurs, the probability of staying at that state until the end of the block is bounded from below as follows:
\[
\Pr^0\left( x_{r_{k+1}}=1\mid \mcup_{n\in A_k} \{x_n=1\}\right) \geq \prod_{n\in A_k}(1-{{a}}_n^{10}).
\]
The right-hand side can be bounded using Eq.~(\ref{eq:01}) and Lemma~\ref{l:ineq}, as follows:
\[
\prod_{n\in A_k}(1-{{a}}_n^{10})\geq 1-\sum_{n \in A_k} {a}_n^{10}\geq \frac{1}{2}.
\]
Using also Eq.~\eqref{eq:two}, we conclude that 
\begin{align*}
\Pr^0(x_{r_{k+1}}=1) \geq&\, \Pr^0(x_{r_{k+1}}=1\mid  \mcup_{n\in A_k} \{x_n=1\})\\
&\cdot \Pr^0(\mcup_{n\in A_k} \{x_n=1\}\mid x_{r_k}=0)\Pr^0(x_{r_{k}}=0) \\
\geq& \frac{1}{2}\cdot \left( 1-e^{-{m}/{2}}\right) \cdot \frac{1}{2}.
\end{align*}

Combining the two cases we conclude that 
\begin{align}
\liminf_{n\rightarrow \infty}&\,\mathbb{P}_d(x_n\neq \theta) \label{eq:result} \\
& \geq\frac{1}{2} \min\left\{\frac{1}{6} \left( 1-e^{-{1}/{(2M)}}\right) ,\frac{1}{4}\left( 1-e^{-{m}/{2}}\right) \right\} >0 \nonumber
\end{align}
which contradicts learning in probability and concludes the proof. 
\end{proof}

Once more, we note that the proof and the result remain valid for the case where randomized decision rules are allowed.

The coupling between the Markov chains associated with the two states of the world is central to the proof of Theorem~\ref{thm:nolearningwithone}. The importance of the BLR assumption is highlighted by the observation that if either $m=0$ or $M=\infty$, then the lower bound obtained in (\ref{eq:result}) is zero, and the proof fails. 
The next  section shows that a similar argument cannot be made to work when $K\geq 2$. In particular, we  construct a decision profile that achieves learning in probability when agents observe  the last two  immediate predecessors.

\section{Learning in probability when $K\geq 2$} \label{sec:learnalgo}

In this section we show that learning in probability is possible when $K\geq 2$, i.e., when each agent  observes the decisions of two or more of its immediate predecessors.

\subsection{Reduction to the case of binary observations}

We will construct a decision profile that leads to learning in probability, for the special case where the signals $s_n$ are binary (Bernoulli) random variables with a different parameter under each state of the world. This readily leads to a decision profile that learns, for the case of general signals. Indeed, if the $s_n$ are general random variables, each agent can quantize its signal, to obtain a quantized signal $s'_n=h(s_n)$ that takes values in $\{0,1\}$. Then, the agents can apply the decision profile for the binary case. The only requirement is that the distribution of $s'_n$ be different under the two states of the world. This is straightforward to enforce by proper choice of the quantization rule $h$: for example, we may  let $h(s_n)=1$ if and only if $\Pr(\theta=1\mid s_n)> \Pr(\theta=0 \mid s_n)$. It is not hard to verify that with this construction and under our assumption that the distributions $\mathbb{F}_0$ and $\mathbb{F}_1$ are not identical, the distributions of $s_n'$ under the two states of the world will be  different.

We also note that it suffices to construct a decision profile for the case where $K=2$. Indeed, if $K>2$, we can have the agents ignore the actions of all but their two immediate predecessors and employ the decision profile designed for the case where $K=2$.

 \subsection{The decision profile}\label{sec:algorithm}
As just discussed, we assume that the signal $s_n$ is binary. For $i=0,1$, we let $p_i=\Pr^i(s_n=1)$ and
$q_i=1-p_i$. We also use $p$ to denote a random variable that is equal to $p_i$ if and only if  $\theta=i$. Finally, we let $\op=(p_0+p_1)/2$ and $\oq=1-\op=(q_0+q_1)/2$.
We assume, without loss of generality, that $p_0<p_1$, in which case we have $p_0<\op<p_1$
and $q_0>\oq>q_1$.

Let $\{k_m\}_{m\in\mathbb{N}}$ and $\{r_m\}_{m\in\mathbb{N}}$  be two sequences of positive integers that we will define later in this section. 
We divide the agents into segments that consist of  S-blocks, R-blocks, and transient agents,  as follows. We do not assign the first two agents to any segment (and the first segment starts with agent $n=3$).
For segment $m \in \mathbb{N}$: 
 \begin{enumerate}[(i)]
 \item   the first  $2k_m-1$ agents belong to the block $S_m$; 
 \item  the next agent is an SR transient agent;
 \item  the   next $2r_m-1$ agents belong to the block $R_m$;
 \item the next agent is an RS transient agent.
 \end{enumerate}

An agent's information consists of the last two decisions, denoted by $\xdn=(x_{n-2},x_{n-1})$, and
its own signal $s_n$.
The decision profile is constructed so as to enforce that if $n$ is the first agent of either an S or R block, then $\xdn=(0,0)$ or $(1,1)$.

\begin{enumerate}[(i)]

\item Agents 1 and 2 choose 0, irrespective of their private signal.
\item{During  block $S_m$, for $m\geq 1$:}
\begin{enumerate}
\item  If the first agent of the block, denoted by $n$, observes $(1,1)$, it chooses $1$, irrespective of its private signal. If it observes $(0,0)$  and its private signal is $1$, then
\[
x_n=z_n,
\]
where $z_n$ is an independent Bernoulli random variable with parameter $1/m$. 
{If $z_n=1$ we say that a \textbf{searching phase is initiated.}}
(The cases of observing $(1,0)$ or $(1,0)$ will not be allowed to occur.)  

\item  For the remaining agents in the block:
\begin{enumerate} 

\item Agents who observe $(0,1)$ decide $0$ for all private signals.

\item Agents who observe $(1,0)$ decide {$1$ if and only if their private signal is $1$.}

\item Agents who observe $(0,0)$ decide $0$ for all private signals.

\item Agents who observe $(1,1)$ decide $1$ for all private signals.

\end{enumerate}
\end{enumerate}

\item During block $R_m:$ 
\begin{enumerate}
\item   If the first agent of the block, denoted by $n$, observes $(0,0)$, it chooses $0$, irrespective of its private signal. If it observes $(1,1)$  and its private signal is $0,$ then
\[
x_n=1-z_n,
\]
where $z_n$ is a Bernoulli random variable with parameter $1/m$. 
{If $z_n=1$,  we say that a \textbf{searching phase is initiated.}}
(The cases of observing $(1,0)$ or $(0,1)$ will not be allowed to occur.)  

\item  For the remaining agents in the block:
\begin{enumerate} 

\item Agents who observe $(1,0)$ decide $1$ for all private signals.

\item Agents who observe $(0,1)$ decide {$0$ if and only if their private signal is $0$.}

\item Agents who observe $(0,0)$ decide $0$ for all private signals.

\item Agents who observe $(1,1)$ decide $1$ for all private signals.

\end{enumerate}
\end{enumerate}

\item {An SR or RS transient agent $n$ sets $x_n=x_{n-1}$, irrespective of its  private signal.}

\end{enumerate}

 \begin{figure}
  \centering
\subfloat[The decision rule for the first agent of block $S_m$.]{\includegraphics[width=4cm]{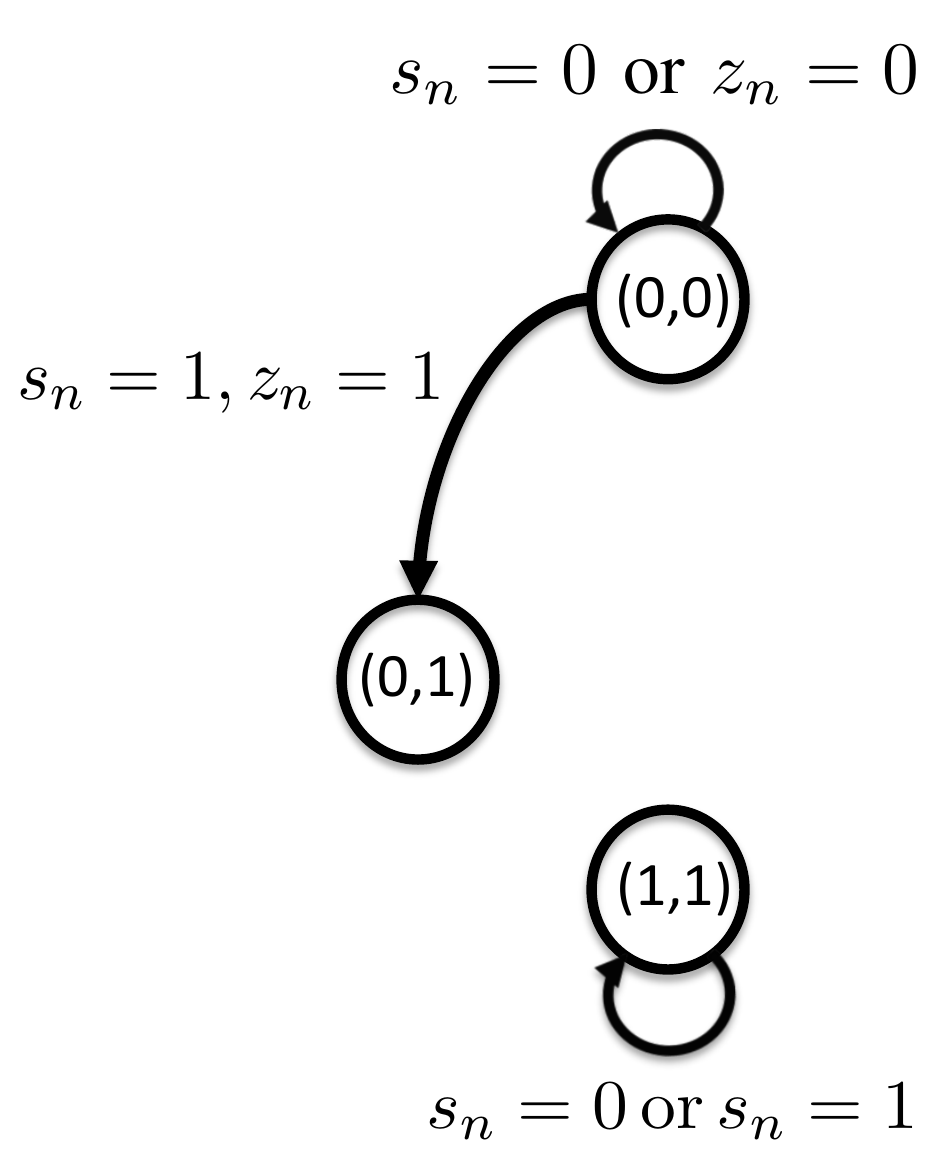}}
\qquad
\subfloat[The decision rule for all agents  of block $S_m$ but the first.]{\includegraphics[width=4cm]{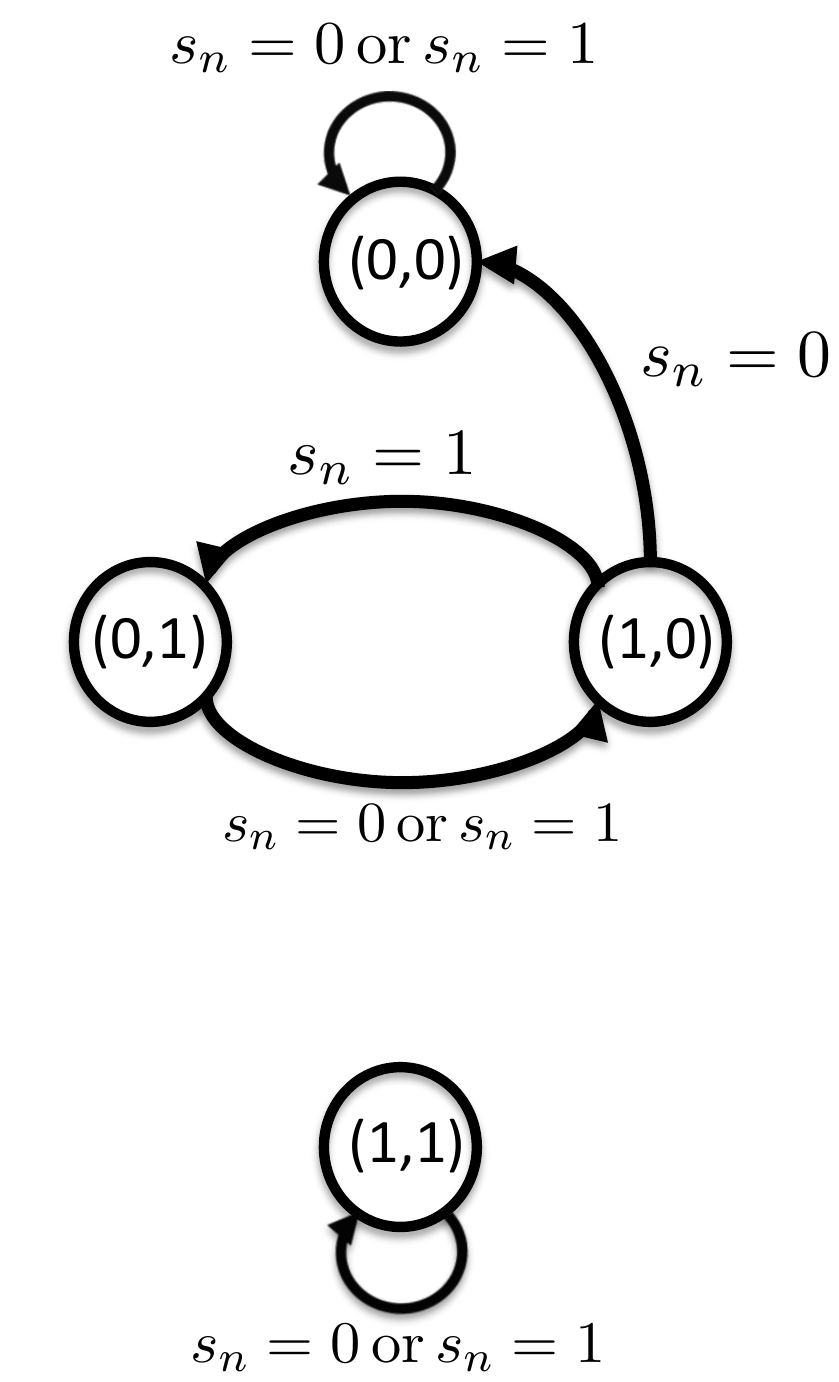}}
\caption{Illustration of the decision profile during block $S_m$. Here, $z_n$ is a Bernoulli random variable, independent from $s_n$ or $\xdn$,  which takes the value $z_n=1$ with a small probability ${1}/{m}$. In this figure, the state represents the decisions of the last two agents and the decision rule dictates the probabilities of transition between states.
 }\label{fig:decrule}
  \end{figure}

We now discuss the 
evolution of the decisions (see also Figure \ref{fig:decrule} for an illustration of the different transitions). We first note that because $\mathbf{v}_3=(x_1,x_2)=(0,0)$ and because of the rules for transient agents,  our requirement that $\xdn$ be either $(0,0)$ or $(1,1)$ when $n$ lies at the beginning of a block, is automatically satisfied. Next, we discuss the possible evolution of $\xdn$ in the course of a block $S_m$. (The case of a block $R_m$ is entirely symmetrical.) Let $n$ be the first agent of the block, and note that the last agent of the block is $n+2k_m-2$.
\begin{enumerate}

\item
If $\mathbf{v}_n=(1,1)$, then $\mathbf{v}_i=(1,1)$ for all agents $i$ in the block, as well as for the subsequent SR transient agent, which is agent {$n+2k_m-1$}. The latter agent also decides $1$, so that the first agent of the next block, $R_m$, observes $\mathbf{v}_{n+2k_m}=(1,1)$.

\item
If $\mathbf{v}_n=(0,0)$ and $x_n=0$, then $\mathbf{v}_i=(0,0)$ for all agents $i$ in the block, as well as for the subsequent SR transient agent, which is agent $n+2k_m-1$. The latter agent also decides $0$, so that the first agent of the next block, $R_m$, observes $\mathbf{v}_{n+2k_m}=(0,0)$.

\item The interesting case occurs when $\mathbf{v}_n=(0,0)$, $s_n=1$, and $z_n=1$, so that a search phase is initiated and $x_n=1$, $\mathbf{v}_{n+1}=(0,1)$, $x_{n+1}=0$, $\mathbf{v}_{n+2}=(1,0)$.
Here there are two possibilities:

\begin{itemize}

\item[a)]
Suppose that for every $i>n$ in the block $S_m$, for which $i-n$ is even {(and with $i$ not the last agent in the block)}, we have $s_i=1$. Then, for $i-n$ even, we will have $\mathbf{v}_{i}=(1,0)$, $x_i=1$, $\mathbf{v}_{i+1}=(0,1)$, $x_{i+1}=0$,
$\mathbf{v}_{i+2}=(1,0)$, etc. When $i$ is the last agent of the block, then $i=n+2k_m-2$, so that
$i-n$ is even, $\mathbf{v}_{i}=(1,0)$, and $x_i=1$. The subsequent SR transient agent, agent $n+2k_m-1$, sets $x_{n+2k_m-1}=1$, so that the first agent of the next block, $R_i$, observes $\mathbf{v}_{n+2k_m}=(1,1)$.

\item[b)] Suppose that for some $i>n$ in the block $S_m$, for which $i-n$ is even, we have $s_i=0$.
Let $i$ be the first agent in the block with this property. We have $\mathbf{v}_{i}=(1,0)$ (as in the previous case), but $x_i=0$, so that $\mathbf{v}_{i+1}=(0,0)$. Then, all subsequent decisions in the block, as well as by the next SR transient agent are 0, and the first agent of the next block, $R_m$, observes $\mathbf{v}_{n+2k_m}=(0,0)$.

\end{itemize}

\end{enumerate}
 
 \old{
\begin{figure}
\centering
\includegraphics[scale=0.5]{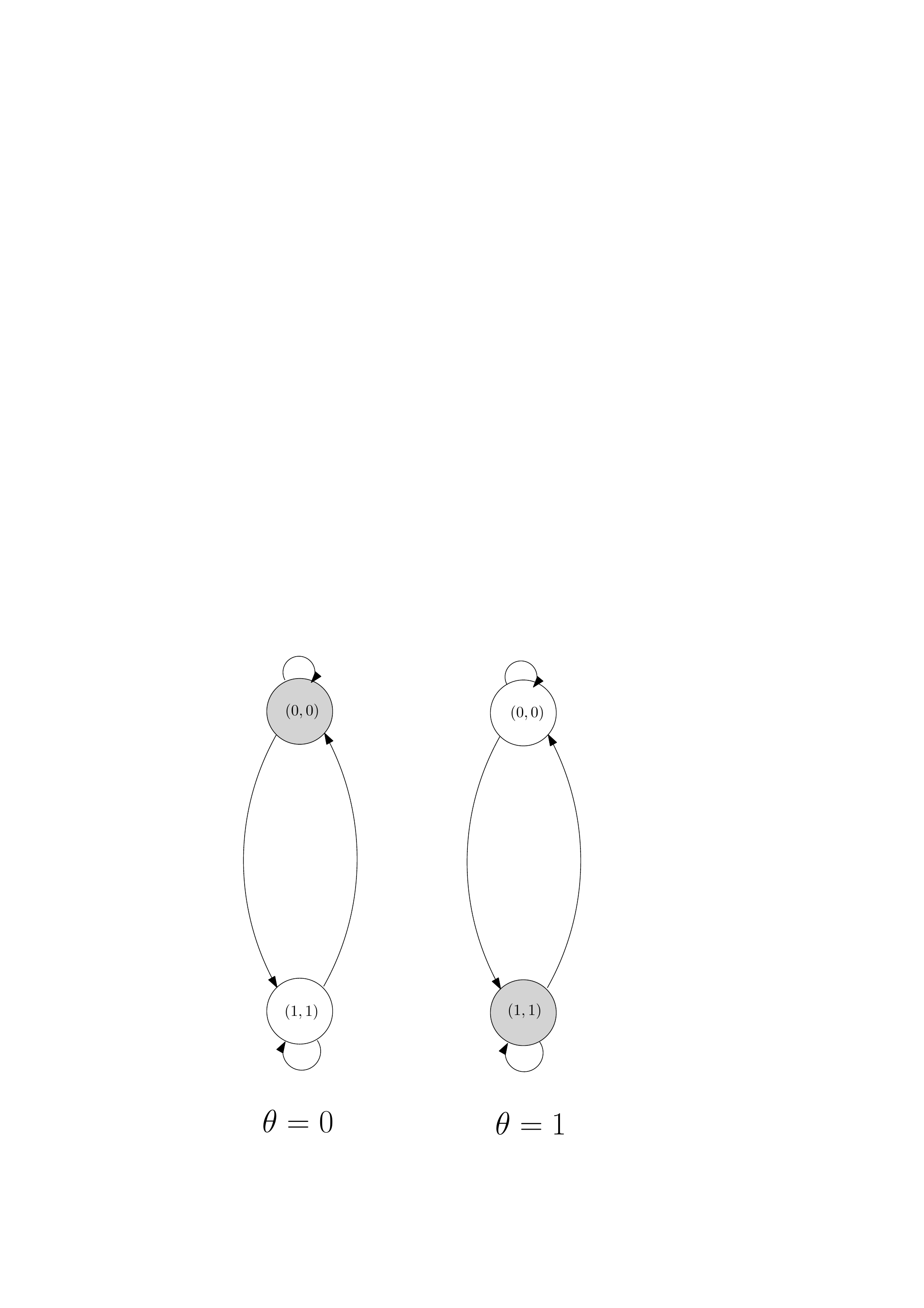}
\caption{The state observed by the first agent of subsequent blocks corresponds to an inhomogeneous Markov chain.}
\label{fig:markov_tn}
\end{figure}
}

To understand the overall effect of our construction, we consider a (non-homogeneous) Markov chain representation of the evolution of decisions.   We focus on the subsequence of agents consisting of the first agent of each S- and R-block. By the construction of the decision profile, the state $\xdn$, restricted to this subsequence, can only take values $(0,0)$ or $(1,1)$, and its evolution can be represented {by} a 2-state Markov chain. The transition probabilities between the states in this Markov chain is given by a product of terms, the number of which is related to the size of the S- and R-blocks.  For learning to occur, there has to be an infinite number of switches between the two states in the Markov chain (otherwise getting {trapped} in an incorrect decision would have positive probability). Moreover, the probability of these switches should go to zero (otherwise there would be a probability of switching to the incorrect decision that is bounded away from zero). We obtain these features by allowing switches from state $(0,0)$ to state $(1,1)$ during S-blocks and switches  from state $(1,1)$ to state $(0,0)$ during R-blocks. By suitably defining blocks of increasing size, we can ensure that the probabilities of such switches remain positive but decay at a desired rate. This will be accomplished by the parameter choices described next.

Let  $\log(\cdot)$ stand for the natural logarithm. 
For $m$ large enough so that $\log m$ is larger than both $1/\op$ and $1/\oq$, 
we let 
\begin{equation}\label{sblocks}
k_m=\left \lceil\log_{1/\op}\left(\log m\right ) \right \rceil,
\end{equation}
and
\begin{equation}\label{rblocks}
r_m=\left \lceil\log_{1/\oq}\left(\log m\right ) \right \rceil,
\end{equation}
both of which are positive numbers. Otherwise, for small $m$, we let $k_m=r_m=1$. These choices guarantee learning.

\begin{theorem}\label{thrm:algorproof}
Under the decision profile  and the parameter choices described in this section, 
\begin{equation*}
\lim_{n\rightarrow  \infty} \mathbb{P}(x_n=\theta)=1.
\end{equation*}
\end{theorem}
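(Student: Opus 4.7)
The plan is to reduce the analysis to an inhomogeneous two-state Markov chain indexed by segments, use the calibration of $k_m$ and $r_m$ to show that this chain becomes absorbed in the correct state, and then translate back to a statement about individual agents. Let $Y_m$ denote the value of $\mathbf{v}_n$ at the first agent of segment $m$; the construction in Section~\ref{sec:algorithm}, in particular the transient-agent rules and the verification that each block (in all its sub-cases) ends with two identical decisions, forces $Y_m\in\{(0,0),(1,1)\}$. Tracing the cases, under the true state $\theta$ the segment-level transitions are
\[
\mathbb{P}^\theta\bigl(Y_{m+1}=(1,1)\mid Y_m=(0,0)\bigr)=\tfrac{p_\theta^{k_m}}{m}\bigl(1-\tfrac{q_\theta^{r_m}}{m}\bigr),\qquad\mathbb{P}^\theta\bigl(Y_{m+1}=(0,0)\mid Y_m=(1,1)\bigr)=\tfrac{q_\theta^{r_m}}{m},
\]
because a $(0,0)\to(1,1)$ transition in $S_m$ requires the first agent to initiate a searching phase (signal $1$ and $z_n=1$, probability $p_\theta/m$) and all $k_m-1$ subsequent even-offset agents to have signal $1$ (probability $p_\theta^{k_m-1}$), and then $R_m$ must not switch back; the reverse case is symmetric with signal $0$ throughout $R_m$.

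Next, with the choices $k_m=\lceil\log_{1/\op}\log m\rceil$ and $r_m=\lceil\log_{1/\oq}\log m\rceil$, a direct calculation shows $p_\theta^{k_m}$ is of order $(\log m)^{-\gamma_\theta}$ with $\gamma_\theta=\log(1/p_\theta)/\log(1/\op)$, and $q_\theta^{r_m}$ of order $(\log m)^{-\delta_\theta}$ with $\delta_\theta=\log(1/q_\theta)/\log(1/\oq)$. The assumption $p_0<\op<p_1$ (equivalently $q_0>\oq>q_1$) gives $\gamma_1<1<\gamma_0$ and $\delta_0<1<\delta_1$, so comparison with $\sum_m 1/(m(\log m)^\gamma)$ yields $\sum_m p_1^{k_m}/m=\sum_m q_0^{r_m}/m=\infty$ and $\sum_m q_1^{r_m}/m=\sum_m p_0^{k_m}/m<\infty$. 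Writing $v_m=\mathbb{P}^\theta(Y_m\ne(\theta,\theta))$ and abbreviating $\alpha_m$ and $\beta_m$ for the correct-direction and wrong-direction switch probabilities appropriate to $\theta$, the Markov dynamics give $v_{m+1}\le v_m(1-\alpha_m)+2\beta_m$ for all sufficiently large $m$. Iterating and invoking Lemma~\ref{l:ineq},
\[
v_{m+1}\le v_1\prod_{i\le m}(1-\alpha_i)+2\sum_{j\le m}\beta_j\prod_{i=j+1}^m(1-\alpha_i);
\]
the first term tends to $0$ since $\sum\alpha_i=\infty$, and splitting the second sum at a large index $J$ with $\sum_{j>J}\beta_j<\varepsilon$ bounds the expression by a constant multiple of $\varepsilon$ for all large $m$. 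Hence $\mathbb{P}^\theta(Y_m=(\theta,\theta))\to 1$.

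Finally, conditional on $Y_m=(\theta,\theta)$, the only way some agent $n$ in segment $m$ disagrees with $\theta$ is that a searching phase is initiated somewhere in segment $m$, an event of probability at most $1/m$; therefore $\mathbb{P}^\theta(x_n=\theta)\ge\mathbb{P}^\theta(Y_m=(\theta,\theta))(1-1/m)\to 1$ as $n\to\infty$, and averaging over $\theta\in\{0,1\}$ yields the theorem. I expect the main technical obstacle to be the calibration step: a single pair of schedules $(k_m,r_m)$ must simultaneously make the correct-direction switching sums diverge and the wrong-direction sums converge under \emph{both} $\theta=0$ and $\theta=1$. The choices anchored at the midpoints $\op$ and $\oq$ work precisely because these mean parameters strictly separate the per-state values $p_0<\op<p_1$ and $q_1<\oq<q_0$, producing the clean inequalities $\gamma_1<1<\gamma_0$ and $\delta_0<1<\delta_1$ that decouple the two hypotheses.
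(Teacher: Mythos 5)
Your proposal is correct and follows essentially the same approach as the paper: the same segment-level two-state Markov chain on the block-boundary states with the same transition probabilities, the same calibration of $k_m,r_m$ by comparison with $\sum_m 1/(m\log^{\alpha} m)$ (your exponents $\gamma_\theta=\log_{\op}(p_\theta)$ and $\delta_\theta=\log_{\oq}(q_\theta)$ are exactly those in Lemma~\ref{lem:algerbr}), and the same final step bounding within-segment disagreement by the $O(1/m)$ searching-phase initiation probability. The only cosmetic differences are that you collapse each segment's S- and R-blocks into a single Markov step and establish $\mathbb{P}^\theta(Y_m=(\theta,\theta))\to 1$ by iterating the recursion $v_{m+1}\le v_m(1-\alpha_m)+2\beta_m$, whereas the paper runs a Borel--Cantelli argument on the two half-steps to get almost sure convergence of the boundary decisions; both yield the theorem.
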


\subsection{Proof of Theorem \ref{thrm:algorproof}}

The proof relies on the following fact.
\begin {lemma} \label{thrm:converg-diverge}
Fix an integer $L\geq 2$. If $\alpha > 1$,  then the series 
\begin{equation*}
\sum_{m=L}^{\infty}\frac{1}{m \, \log^{\alpha}(m)},
\end{equation*}
converges; if $ \alpha \leq 1, $ then  the series diverges. \end{lemma}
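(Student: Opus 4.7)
The plan is to prove this by the integral test (or, equivalently, Cauchy condensation), after a brief separate treatment of the case $\alpha \le 0$.

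First, for $\alpha \le 0$, I would simply note that $\log^{\alpha}(m) \le 1$ for all $m \ge e$, so that $\frac{1}{m\log^{\alpha}(m)} \ge \frac{1}{m}$, and divergence follows from divergence of the harmonic series. (If $L < e$, drop finitely many terms; this does not affect convergence.)

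Second, for $\alpha > 0$, the function $f(x) = \frac{1}{x \log^{\alpha}(x)}$ is positive, continuous, and strictly decreasing on $[e,\infty)$, since $x \log^{\alpha}(x)$ is strictly increasing there. So by the integral test, the series $\sum_{m=L}^{\infty} f(m)$ converges if and only if $\int_{L}^{\infty} f(x)\,dx$ converges (again, possibly after discarding finitely many initial terms so that $L \ge e$). Making the substitution $u = \log x$, $du = dx/x$, the integral becomes
\[
\int_{L}^{\infty} \frac{dx}{x \log^{\alpha}(x)} = \int_{\log L}^{\infty} \frac{du}{u^{\alpha}},
\]
which is a standard $p$-integral over $[\log L, \infty)$, converging precisely when $\alpha > 1$ and diverging when $0 < \alpha \le 1$.

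Combining the two cases yields the claimed dichotomy. No step is genuinely hard here: the only mild annoyance is that monotonicity of $f$ fails for $\alpha \le 0$, which is why I would handle that case by direct comparison with the harmonic series rather than by the integral test.
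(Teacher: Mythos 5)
Your proof is correct. For comparison: the paper does not prove this lemma at all---it simply cites Theorem 3.29 of Rudin's \emph{Principles of Mathematical Analysis}, which is precisely this statement (and which Rudin establishes via the Cauchy condensation test, collapsing the series to essentially $\sum_k 1/k^{\alpha}$). You instead give a self-contained argument via the integral test with the substitution $u=\log x$, reducing to the $p$-integral $\int_{\log L}^{\infty} u^{-\alpha}\,du$, and you correctly notice the one point where care is needed: monotonicity of $x\mapsto 1/(x\log^{\alpha} x)$ can fail for $\alpha\le 0$, so you handle that case by direct comparison with the harmonic series. Both routes are standard and equally rigorous; yours has the advantage of being self-contained, while the condensation route avoids any appeal to calculus and is the one the cited reference uses. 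Either is acceptable here.
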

\begin{proof}
See Theorem 3.29 of \cite{Rudin}.
\end{proof}

The next lemma characterizes the transition probabilities of the non-homogeneous Markov chain associated with the state of the first agent of each block.
For any $m\in\mathbb{N}$, let 
$w_{2m-1}$ be the decision of the last agent before block $S_m$, and let $w_{2m}$ be the decision of the last agent before block $R_m$. Note that for $m=1$, $w_{2m-1}=w_1$ is the decision $x_2=0$, since the first agent of block $S_1$ is agent $3$.
More generally, when $i$ is odd (respectively, even),  $w_i$ describes the state at the beginning of an S-block (respectively, R-block), and in particular, the decision of the transient agent preceding the block.

\begin{lemma}\label{lem:tranprob}
We have
$$
\mathbb{P}(w_{i+1}=1 \mid w_i=0){=} \begin{cases} p^{k_{m(i)}}\cdot \displaystyle{\frac{1}{{m(i)}}}, &\text{if} \ $i$  \text{ is odd,}\\ 
																									0, &\text{otherwise,}
																		\end{cases}
$$
 and
$$\mathbb{P}(w_{i+1}=0 \mid w_i=1){=} \begin{cases}q^{r_{m(i)}} \cdot \displaystyle{\frac{1}{{m(i)}}}, &\text{if} \ $i$ \ \text{ is even,}\\
																									0,&\text{otherwise,}
																		\end{cases}
$$
{where  } 
\[
m(i)=\begin{cases} {(i+1)}/{2},&\text{if} \ i \ \text{ is odd,} \\
   								{i}/{2},&\text{if} \  i \ \text{is even.}
   								\end{cases}
\]
(The above conditional probabilities are taken under either state of the world $\theta$, with the parameters $p$ and $q$ on the right-hand side being the corresponding probabilities that $s_n=1$ and $s_n=0$.)
\end{lemma}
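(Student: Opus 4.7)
The plan is to carry out a direct case analysis by tracing the deterministic state evolution inside one block, conditioned on the boundary decision $w_i$. I would handle odd $i$ first (a transition across an S-block) and then appeal to symmetry (swapping $0\leftrightarrow 1$, $p\leftrightarrow q$, $k_{m}\leftrightarrow r_{m}$) for even $i$ (an R-block). The key observation is that, by the construction of the decision profile together with the rule for transient agents, $w_i=0$ forces the first agent $n$ of $S_{m(i)}$ to observe $\mathbf{v}_n=(0,0)$, while $w_i=1$ forces $\mathbf{v}_n=(1,1)$; the indexing $m(i)=(i+1)/2$ for odd $i$ (respectively $i/2$ for even $i$) is then immediate from the definition of the $w_i$'s.

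First I would dispose of the deterministically impossible transitions. If $i$ is odd and $w_i=1$, then $\mathbf{v}_n=(1,1)$ and rule (ii)a forces $x_n=1$; every subsequent agent in $S_{m(i)}$ also observes $(1,1)$ and, by rule (ii)b.iv, decides $1$, as does the following SR transient agent. Hence $\mathbb{P}(w_{i+1}=0\mid w_i=1)=0$. The mirror-image argument inside an R-block shows that $\mathbb{P}(w_{i+1}=1\mid w_i=0)=0$ when $i$ is even.

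The main step is the transition from $w_i=0$ to $w_{i+1}=1$ for odd $i$. Here $\mathbf{v}_n=(0,0)$, and by rule (ii)a the only way to produce $x_n=1$ is for both $s_n=1$ (probability $p$) and $z_n=1$ (probability $1/m(i)$) to hold, i.e., for a searching phase to be initiated. Conditioning on this, I would verify by induction on the offset $j$ that $\mathbf{v}_{n+j}=(1,0)$ whenever $j$ is even and $\mathbf{v}_{n+j}=(0,1)$ whenever $j$ is odd, provided that $s_{n+j}=1$ at every subsequent even-offset agent; should any such signal equal $0$, rule (ii)b.ii produces $x_{n+j}=0$, and rule (ii)b.iii then freezes the remainder of the block (and the trailing SR transient agent) at $0$. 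The even-offset agents strictly after $n$ sit at positions $n+2,n+4,\dots,n+2k_{m(i)}-2$, exactly $k_{m(i)}-1$ of them; each must have signal $1$, and the SR transient agent then copies $x_{n+2k_{m(i)}-2}=1$ into the observation of the first agent of $R_{m(i)}$. By mutual independence of the $\{s_n\}$ and $\{z_n\}$ across agents (given $\theta$), the overall probability is
\[
\frac{1}{m(i)}\cdot p\cdot p^{k_{m(i)}-1}=\frac{p^{k_{m(i)}}}{m(i)},
\]
as claimed. The case of even $i$ is identical after swapping $0\leftrightarrow 1$, $p\leftrightarrow q$, $S\leftrightarrow R$, and $k_{m}\leftrightarrow r_{m}$.

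The only real obstacle is the bookkeeping of exactly which even-offset positions must carry $s=1$; once the block indexing is written out, the rest follows mechanically from the case-by-case rules. Since every probability appearing in the derivation is of the form $\mathbb{P}^\theta(s_n=1)$ or $\mathbb{P}^\theta(s_n=0)$, and the auxiliary variables $z_n$ are independent of $\theta$, the identical derivation goes through under either state of the world, with $p$ and $q$ interpreted as the corresponding conditional signal probabilities.
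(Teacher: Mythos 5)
Your proof is correct and follows essentially the same route as the paper's: you identify the unique sequence of events (search-phase initiation, requiring $s_n=1$ and $z_n=1$ with probability $p\cdot\frac{1}{m(i)}$, followed by signal $1$ at the remaining $k_{m(i)}-1$ even-offset agents) that produces the transition, and dispose of the impossible transitions via the deterministic absorption at $(1,1)$ or $(0,0)$. The paper's own proof is simply a terser version of the same case analysis, with the state-evolution bookkeeping deferred to the discussion preceding the lemma.
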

\begin{proof} 
Note that $m(i)$ is defined so that $w_i$ is associated with the beginning of either block $S_{m(i)}$ or $R_{m(i)}$, depending on whether $i$ is odd or even, respectively. 

Suppose that $i$ is odd, so that we are dealing with the beginning of an S-block. If $w_i=1$, then,
as discussed in the previous subsection, we will have $w_i=1$, which proves that
$\mathbb{P}(w_{i+1}=0 \mid w_i=1){=} 0$.

Suppose now that $i$ is odd and $w_i=0$.
In this case, there exists only one particular sequence of events under which the  state will  change to $w_{i+1}=1$. Specifically, the searching phase should be initiated (which happens with probability $1/m(i)$), and the private signals of {about} half of the agents in the block $S_{m(i)}$ ($k_{m(i)}$ of them)  should be equal to $1$. The probability of this sequence of events is precisely the one given in the statement of the lemma.
 
The transition probabilities for the case where $i$ is even are obtained by a symmetrical argument.
\end{proof}

The reason behind our definition of $k_m$ and $r_m$ is that we wanted to enforce 
Eqs.~\eqref{cond1}-\eqref{cond2} in the lemma
 that follows. 

\begin{lemma}\label{lem:algerbr}
We have
\begin{equation}
\sum_{m=1}^{\infty} p_1^{k_m} \frac{1}{m}=\infty, \qquad
\sum_{m=1}^{\infty} q_1^{r_m} \frac{1}{m}<\infty, \label{cond1}
\end{equation}
and
\begin{equation}
\sum_{m=1}^{\infty} p_0^{k_m} \frac{1}{m}<\infty,\qquad
\sum_{m=1}^{\infty} q_0^{r_m} \frac{1}{m}=\infty.\label{cond2}
\end{equation}

\end{lemma}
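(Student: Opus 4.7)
The plan is to reduce each of the four sums to a Bertrand-type series of the form $\sum 1/(m (\log m)^\alpha)$ and then apply Lemma~\ref{thrm:converg-diverge}. The four statements then boil down to checking, for each sum, whether the resulting exponent $\alpha$ is smaller or larger than $1$.

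First I would unpack the definitions. For $m$ large enough, $k_m = \lceil \log_{1/\op}(\log m)\rceil$, so write $k_m = \log(\log m)/\log(1/\op) + \epsilon_m$ with $\epsilon_m \in [0,1)$. For any $p \in (0,1)$,
\[
p^{k_m} = p^{\epsilon_m} \cdot p^{\,\log(\log m)/\log(1/\op)} = p^{\epsilon_m} \cdot (\log m)^{\log p / \log \op}.
\]
Since $p^{\epsilon_m} \in [p,1]$, the leading factor is bounded above and below by positive constants (independent of $m$). Setting $\alpha_p \;\bydef\; \log p/\log \op$, this yields $p^{k_m}/m = \Theta\!\left(1 / (m(\log m)^{\alpha_p})\right)$, and the corresponding tail of $\sum_m p^{k_m}/m$ has the same convergence behavior as $\sum_m 1/(m (\log m)^{\alpha_p})$.

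Next I would determine which $\alpha$'s lie above and below $1$. Since $\op \in (0,1)$, the quantity $\log \op$ is negative. The assumption $p_0 < \op < p_1$ gives $\log p_0 < \log \op < \log p_1 < 0$, so dividing by the negative number $\log \op$ reverses the inequalities: $\alpha_{p_0} > 1 > \alpha_{p_1} > 0$. Lemma~\ref{thrm:converg-diverge} then immediately gives $\sum_m p_1^{k_m}/m = \infty$ (divergence, since $\alpha_{p_1} \le 1$) and $\sum_m p_0^{k_m}/m < \infty$ (convergence, since $\alpha_{p_0} > 1$), which is \eqref{cond1} (first sum) and \eqref{cond2} (first sum).

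A symmetric argument handles the $q$-sums. With $r_m = \lceil \log_{1/\oq}(\log m)\rceil$ and $\beta_q \;\bydef\; \log q /\log \oq$, the same estimate gives $q^{r_m}/m = \Theta(1/(m(\log m)^{\beta_q}))$. Because $p_0 < p_1$ implies $q_1 < \oq < q_0$, the inequalities $\log q_1 < \log \oq < \log q_0 < 0$ yield $\beta_{q_1} > 1 > \beta_{q_0} > 0$, and Lemma~\ref{thrm:converg-diverge} concludes the remaining two parts of \eqref{cond1}-\eqref{cond2}. There is no substantive obstacle here: the only care required is in tracking the signs when dividing by $\log \op$ and $\log \oq$ (both negative), and in verifying that the ceiling in the definitions of $k_m, r_m$ introduces only a bounded multiplicative factor, which does not affect convergence or divergence.
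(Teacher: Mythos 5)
Your proof is correct and takes essentially the same route as the paper: both reduce each sum to a Bertrand-type series $\sum_m 1/(m\log^{\alpha}m)$ with $\alpha=\log_{\op}(p)=\log p/\log\op$ (resp.\ $\beta=\log_{\oq}(q)$), note that the ceiling only contributes a bounded multiplicative factor, and then compare the exponent to $1$ via Lemma~\ref{thrm:converg-diverge} using $p_0<\op<p_1$ and $q_1<\oq<q_0$. The only blemish is a sign slip in your displayed identity, which should read $p^{\log(\log m)/\log(1/\op)}=(\log m)^{\log p/\log(1/\op)}=(\log m)^{-\log p/\log\op}$; your subsequent $\Theta\bigl(1/(m(\log m)^{\alpha_p})\bigr)$ statement and all the conclusions drawn from it already use the correct sign, so the argument stands.
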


\begin{proof}
For $m$ large enough, the definition of $k_m$ implies that 
\begin{equation*}
\log_{\op}\left(\frac{1}{\log m}\right)\leq k_m < \log_{\op}\left(\frac{1}{\log m}\right) +1,
\end{equation*}
or equivalently, 
\begin{equation*}
p \cdot p^{\log_{\op}\left( \frac{1}{\log m}\right) }< p^{k_m} \leq  p^{\log_{\op}\left( \frac{1}{\log m}\right) },
\end{equation*}
where $p$ stands for either $p_0$ or $p_1$.
(Note that the direction of the inequalities was reseversed because the base $\op$ of the logarithms is less than $1$.)
Dividing by $m$, using the identity $p=\op^{\log_{\op}(p)}$, after some elementary manipulations, we obtain
\[
p\frac{1}{m \log^{\alpha}{m}}< p^{k_m} \frac{1}{m} 
\leq  \frac{1}{m \log^{\alpha}{m}},   
\]
where $\alpha= \log_{\op}(p)$.
By a similar argument, 
\[
q \frac{1}{m\log^{\beta}{m}}< q^{k_m} \frac{1}{m} 
\leq  \frac{1}{m\log^{\beta}{m}},
\]
where $\beta= \log_{\oq}(q)$.

Suppose  that  $p=p_1$, so that $p>\op$ and $q<\oq$. 
Note that $\alpha$ is a decreasing function of $p$, because the base of the logarithm satisfies $\op<1$. Since $\log_{\op}(\op)=1$, it follows that {$\alpha=\log_{\op}(p)<1$, and by a parallel argument, $\beta>1$.} 
Lemma \ref{thrm:converg-diverge} then implies that  conditions  (\ref{cond1}) hold.
Similarly, if $p=p_0$, so that $p<\op$ and $q>\oq$,  then {$\alpha >1$ and $\beta <1$,} and
conditions  (\ref{cond2}) follow again from
Lemma \ref{thrm:converg-diverge}.
\end{proof}

We are now ready to complete the proof, using a standard Borel-Cantelli argument.

 \begin{proof} [Proof of Theorem \ref{thrm:algorproof}]
Suppose that $\theta=1$. Then, by Lemmata \ref{lem:tranprob} and  \ref{lem:algerbr}, we have  that 
\[
\sum_{i=1}^\infty \mathbb{P}^{{1}}(w_i=1 \mid w_i=0) =\infty,
\]
while
\[
\sum_{i=1}^\infty \mathbb{P}^{{1}}(w_{i+1}=0\mid w_i=1) <\infty.
\]
Therefore,  transitions from the state $0$ of the Markov chain $\{w_i\}$ to  state $1$ are guaranteed to happen,  while transitions from state {$1$} to state {$0$} will happen only finitely many times. It follows that $w_i$ converges to $1$, almost surely, when $\theta =1$. 
By a symmetrical argument, $w_i$ converges to $0$, almost surely, when $\theta=0$.

Having proved (almost sure) convergence of the sequence $\{w_i\}_{i \in\mathbb{N}}$, it remains to prove convergence (in probability) of the sequence $\{x_n\}_{n \in \mathbb{N}}$ (of which
$\{w_i\}_{i \in\mathbb{N}}$ is a subsequence). This is straightforward, and we only outline the argument. If $w_i$ is the decision $x_n$ at the beginning of a segment, then $x_n=w_i$ for all $n$ during that segment, unless a searching phase is initiated. A searching phase gets initiated with probability at most $1/m$ at the beginning of the S-block and with probability at most $1/m$ at the beginning of the R-block. Since these probabilities go to zero as $m\to\infty$, it is not hard to show that $x_n$ converges in probability to the same limit as $w_i$.

\end{proof}
\old{
%%% There is a sharp contrast between the learning results for $K=1$ and $K>1$. The reason is best understood by considering the relation between transition probabilities of the Markov chain representations of decision processes. For $K=1$, under the BLR assumption, the transition probabilities between the states $0$ and $1$ under the two states of the world are coupled (see Lemma \ref{cor:blrlemma}). This is central to establishing that learning is not possible under any decision rule. For $K=2$, the Markov chain that describes the evolution of the decisions of agents has four states and the transition probabilities between the states $(0,0)$ and $(1,1)$ can be designed to be a product of intermediate terms whose number is a decreasing function of time. Using this construction we can ensure that the ratio of the transition probabilities under the two states of the world asymptotically reach zero or infinity, similar to what one would obtain for signal structures with Unbounded Likelihood Ratios.
}

The existence of a decision profile that guarantees learning in probability naturally leads to the question of providing incentives to agents to behave accordingly. It is known  \cite{Cov69,AtPa89,Acem10} that for Bayesian agents who minimize the probability of an erroneous decision, learning in probability does not occur, which brings up the question of designing a game whose equilibria have desirable learning properties. A natural choice for such a game is explored in the next section, although our results will turn out to be negative. 

\section{Forward looking agents}\label{se:fwd}
In this section, we assign to each agent a payoff function  that depends on its own decision as well as on future decisions. We consider the resulting game between the agents and study the learning properties of the equilibria of this game. In particular, we show that learning fails to obtain at any of these equilibria.

\subsection{Preliminaries and notation}

In order to conform to game-theoretic terminology, we will now talk about strategies $\sigma_n$ (instead of decision rules $d_n$). 
A (pure) \textbf{strategy} for agent $n$ is a mapping $\sigma_n :  \{0,1\}^K \times S\rightarrow \{0,1\}$ from the agent's information set (the vector $\xdn=(x_{n-1},\ldots,x_{n-K})$ of decisions of its $K$ immediate predecessors and its private signal $s_n$) to a binary decision, so that 
$x_n=\sigma_n(\xdn,s_n)$. 
A \textbf{strategy profile} is a sequence of strategies, $\sigma=\{\sigma_n\}_{n\in \mathbb{N}}$.  We  use the standard notation $\sigma_{-n}=\{\sigma_1,\ldots, \sigma_{n-1},\sigma_{n+1},\ldots\}$ to denote the collection of strategies of all agents other than $n$, so that $\sigma=\{\sigma_{n},\sigma_{-n}\}$. 
Given a strategy profile $\sigma$, the resulting sequence of decisions $\{x_n\}_{n \in \mathbb{N}}$ is a well defined stochastic process.
% and we denote the measure  generated by this stochastic process by $\mathbb{P}_\sigma$.

 The payoff function of agent $n$ is
\begin{equation} 
\sum_{k=n}^{\infty}\delta^{k-n} \mathbf{1}_{x_k=\theta}, \label{payoff}
\end{equation}
where $\delta \in (0,1)$ is a discount factor, and $\mathbf{1}_A$ denotes the indicator random variable of an event $A$. 
Consider some agent $n$ and suppose that the strategy profile $\sigma_{-n}$ of the remaining agents has been fixed. Suppose that agent $n$ observes a particular vector $\u$ of predecessor decisions (a realization of $\xdn$) and a realized value $s$ of the private signal $s_n$. Note that $(\xdn,s_n)$ has a well defined distribution once $\sigma_{-n}$ has been fixed, and can be used by agent $n$ to construct a conditional distribution (a posterior belief) on $\theta$. Agent $n$ now considers the two alternative decisions, $0$ or $1$. For any particular decision that agent $n$ can make, the decisions of subsequent agents $k$ will be fully determined by the recursion $x_k=\sigma_n(\mathbf{v}_k,s_k)$, and will also be well defined random variables. This means that the conditional expectation of agent $n$'s payoff, if agent $n$ makes a specific decision $y\in\{0,1\}$,
\begin{align*}
U_n&(y; \mathbf{u}, s) \nonumber\\
&=\mathbb{E}\left[\mathbf{1}_{\theta=y}+\sum_{k=n+1}^{\infty}\delta^{k-n} \mathbf{1}_{x_k=\theta} \ \Big| \  \xdn=\mathbf{u},  s_n=s\right],
\end{align*}
is unambiguously defined, modulo the usual technical caveats associated with conditioning on zero probability events; 
in particular, the conditional expectation is uniquely defined for ``almost all'' $(\mathbf{u}, s)$, that is, modulo on a set of $(\xdn,s_n)$ values that have zero probability measure under $\sigma_{-n}$.
We can now define our notion of equilibrium, which requires that given the decision profile of the other agents, each agent maximizes its conditional expected payoff $U_n(y; \mathbf{u}, s)$  over $y\in\{0,1\}$, for almost all $(\mathbf{u}, s)$.

\begin{definition}\label{d:equil}
A strategy profile $\sigma$ is an \textbf{equilibrium}  if for each $n \in \mathbb{N}$, for each vector of observed actions $\mathbf{u} \in \{0,1\}^K$ that can be realized under $\sigma$ with positive probability (i.e., $\Pr(\mathbf{v}_n=\mathbf{u})>0$), and 
for almost all $s\in S$, $\sigma_n$ maximizes the expected payoff of agent $n$ given the strategies of the other agents, $\sigma_{-n}$, i.e.,\[
\sigma_n(\mathbf{u}, s)\in \argmax_{y \in \{0,1\}} U_n(y, \mathbf{u}, s).
\]
\end{definition}

Our main result follows.

\begin{theorem} \label{thrm:frwrdlooking}
For any discount factor $\delta \in [0,1)$ and for any equilibrium strategy profile, learning fails to hold.
\end{theorem}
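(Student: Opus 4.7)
The plan is to argue by contradiction: suppose $\sigma$ is an equilibrium strategy profile that achieves learning in probability, and then exploit the equilibrium best-response condition together with BLR to force the two ``consensus'' states $(0,\ldots,0)$ and $(1,\ldots,1)$ to become absorbing for all sufficiently large $n$. First I would translate the learning hypothesis into posterior concentration. From $\mathbb{P}^0(x_n=0)\to 1$ and a union bound, $\mathbb{P}^0(\mathbf{v}_n=(0,\ldots,0))\to 1$, while obviously $\mathbb{P}^1(\mathbf{v}_n=(0,\ldots,0))\to 0$. Applying Bayes' rule and bounding the signal likelihood ratio by $1/m$ via BLR then yields $\mathbb{P}(\theta=0\mid \mathbf{v}_n=(0,\ldots,0),s_n=s)\to 1$ uniformly in $s$; symmetrically, $\mathbb{P}(\theta=1\mid \mathbf{v}_n=(1,\ldots,1),s_n=s)\to 1$.

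The core step is to show that, for all large $n$, the equilibrium must satisfy $\sigma_n((0,\ldots,0),s)=0$ and $\sigma_n((1,\ldots,1),s)=1$ for almost every $s$. Writing $U_n(y;\mathbf{u},s)=\pi_y+V_n(y)$ with $\pi_j=\mathbb{P}(\theta=j\mid\mathbf{u},s)$, and using that $s_n$ is conditionally independent of future signals given $\theta$, I decompose
\[
V_n(y)=\pi_0 G_{0,y}+\pi_1 G_{1,y},\quad G_{j,y}=\mathbb{E}\!\left[\sum_{k>n}\delta^{k-n}\mathbf{1}_{x_k=j}\,\Big|\,\mathbf{v}_n=\mathbf{u},x_n=y,\theta=j\right].
\]
For $\mathbf{u}=(0,\ldots,0)$, the Markov property of $\{\mathbf{v}_n\}$ under $\theta=0$ combined with the elementary bound $\mathbb{P}^0(x_k=1\mid\mathbf{v}_{n+1}=(0,\ldots,0))\leq \mathbb{P}^0(x_k=1)/\mathbb{P}^0(\mathbf{v}_{n+1}=(0,\ldots,0))$ gives $G_{0,0}\geq \delta/(1-\delta)-o(1)$ (the denominator tends to $1$, while $\mathbb{P}^0(x_k=1)\to 0$ lets the discounted tail sum vanish); meanwhile $G_{0,1}\leq \delta/(1-\delta)$ trivially and $|G_{1,0}-G_{1,1}|\leq \delta/(1-\delta)$. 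Since $\pi_0\to 1$ and $\pi_1\to 0$ uniformly in $s$, these estimates combine to $V_n(0)-V_n(1)\geq -o(1)$, whence $U_n(0)-U_n(1)=(\pi_0-\pi_1)+V_n(0)-V_n(1)\to 1>0$. For $n$ large the difference exceeds $1/2$, so the equilibrium condition forces $\sigma_n((0,\ldots,0),s)=0$ for almost every $s$; the case of $(1,\ldots,1)$ is symmetric.

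To close the contradiction, fix $n^*$ large enough that both absorbing conditions hold for all $n\geq n^*$ and that $\mathbb{P}^1(\mathbf{v}_{n^*}=(1,\ldots,1))>0$ (true for large $n^*$ by learning under $\theta=1$). The event $\{\mathbf{v}_{n^*}=(1,\ldots,1)\}$ is measurable with respect to $(s_1,\ldots,s_{n^*-1})$, whose joint laws under the two hypotheses are mutually absolutely continuous by BLR; hence $\mathbb{P}^0(\mathbf{v}_{n^*}=(1,\ldots,1))>0$ as well. Absorption then gives $\mathbb{P}^0(x_n=1)\geq \mathbb{P}^0(\mathbf{v}_{n^*}=(1,\ldots,1))>0$ for every $n\geq n^*$, contradicting $\mathbb{P}^0(x_n=1)\to 0$. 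The main obstacle is the continuation-payoff estimate in the middle paragraph: one must control the forward-looking temptation $V_n(1)-V_n(0)$ to deviate against one's posterior, even when the discount factor $\delta$ is close to $1$ (where the crude bound $\delta/(1-\delta)$ is large). The decisive observation that makes this tractable is that under the learning hypothesis, conforming to the consensus already realizes, up to $o(1)$, the maximum possible discounted future utility $\delta/(1-\delta)$, so any deviation can only lose, not gain, in continuation value.
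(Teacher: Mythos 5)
Your proof is correct, and it shares the paper's overall skeleton---under the learning hypothesis, posteriors concentrate at the consensus observation, the equilibrium condition then forces every late agent to copy the consensus, and absorption contradicts learning---but both key technical steps are executed differently. For the best-response comparison, the paper works only at the all-ones state and lower-bounds $U_n(1;\e,s)$ by restricting to the event $\theta=1$ and to a growing horizon $t_n$ over which the accumulated switching probability $\sum_k\gamma_k$ stays below a pre-chosen $\epsilon<1-\delta$, yielding $\liminf U_n(1)\geq(1-\epsilon)/(1-\delta)>\delta/(1-\delta)\geq\limsup U_n(0)$. You instead decompose the continuation value as $\pi_0 G_{0,y}+\pi_1 G_{1,y}$ and use $\Pr(A\mid B)\leq\Pr(A)/\Pr(B)$ together with $\Pr^0(\mathbf{v}_{n+1}=(0,\ldots,0))\to 1$ to show that conforming already attains $\delta/(1-\delta)-o(1)$ of the continuation payoff, so $U_n(0)-U_n(1)\to 1$; this cleanly isolates why the forward-looking temptation vanishes even as $\delta\to 1$ and avoids the paper's $\epsilon$/$t_n$ bookkeeping. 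The larger divergence is the closing contradiction: the paper deduces that $x_n\to\theta$ almost surely (consensus occurs infinitely often under each $\theta$, and is absorbing after time $N$) and then invokes Theorem \ref{thm:almsure}, whereas you contradict learning in probability directly by observing that, by mutual absolute continuity of the signal laws, the all-ones consensus is reached with positive probability under $\theta=0$ and is then absorbing, so $\Pr^0(x_n=1)$ stays bounded away from zero. Your route is self-contained (it does not lean on Theorem \ref{thm:almsure}), at the modest cost of arguing about a state that is off the typical path under $\theta=0$. One small point to make explicit: Definition \ref{d:equil} only constrains $\sigma_n$ at observation vectors realized with positive probability, so you should note that $\Pr(\mathbf{v}_n=\e)>0$ and $\Pr(\mathbf{v}_n=(0,\ldots,0))>0$ for all large $n$, which follows from your concentration step.
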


We note that the set of equilibria, as per Definition \ref{d:equil}, contains the Perfect Bayesian Equilibria,   as defined in \cite{FuTi}. Therefore, Theorem \ref{thrm:frwrdlooking} implies that there is no learning at any  Perfect Bayesian Equilibrium.

From now on, we assume that we fixed a specific strategy profile $\sigma$.
Our analysis centers around the case where an agent observes a sequence of ones from its immediate predecessors, that is, $\vn=\e$, where $\e=(1, 1, \ldots, 1)$.  The posterior probability that the state of the world is equal to $1$, based on having observed a sequence of ones  is defined by 
\[
\pi_n=\Pr(\theta=1 \mid \vn=\e).
\]
Here, and in the sequel, we use $\Pr$ to indicate probabilities of various random variables under the distribution induced by $\sigma$, and similarly for the conditional measures $\Pr^j$ given that the state of the world is $j\in\{0,1\}$.
For any private signal value $s \in S$, we  also define 
\[
f_n(s)=\Pr(\theta=1 \mid \vn=\e, s_n=s).
\]
Note that these conditional probabilities are well defined as long as $\Pr(\vn=\e)>0$ and for almost all $s$. We also let
\[
f_n=\textrm{essinf}_{s \in S} f_n(s).
\]
Finally, for every agent $n$, we define the \emph{switching probability} under the state of the world $\theta=1$, by 
\[
\gamma_n=\Pr^1(\sigma_n(\e,s_n)=0).
\]

{We will prove our result by contradiction, and so we assume} that $\sigma$ is an equilibrium that achieves learning in probability. In that case, under state of the world $\theta=1$,  all agents will eventually be choosing $1$ with high probability. Therefore, when $\theta=1$, blocks of size $K$ with all agents choosing $1$ (i.e., with $\mathbf{v}_n=\mathbf{e}$) will also occur with high probability. 
The Bayes rule will then imply that  
 the posterior probability that $\theta=1$, given that $\mathbf{v}_n=\mathbf{e}$,  will eventually be arbitrarily close to one.  The above are formalized in the next Lemma. 

\begin{lemma} \label{lem:frwrdproperties}
{Suppose that the strategy profile $\sigma$ leads to learning in probability. Then,}
\begin{enumerate}[(i)]
\item $\lim_{n \rightarrow \infty}\Pr^0(\mathbf{v}_n=  \e)=0$ and\/ $\lim_{n \rightarrow \infty}\Pr^1(\mathbf{v}_n=  \e)=1$.

\item $\lim_{n \rightarrow \infty} \pi_n =1$,

\item $\lim_{n \rightarrow \infty}f_n(s)=1,$ { uniformly over all } $s \in S$, except possibly on a zero  measure subset of $S$.

\item $\lim_{n \rightarrow \infty} \gamma_n=0$.

\end{enumerate}
\end{lemma}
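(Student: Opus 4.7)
The plan is to derive all four parts of the lemma directly from the hypothesis of learning in probability, together with the BLR assumption and the conditional independence of $s_n$ from $\mathbf{v}_n$ given $\theta$. Nothing about the equilibrium structure is actually used; we simply exploit that, under each state $j$, the decisions $x_n$ concentrate on $j$ in probability.

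For (i), the first limit follows at once from $\Pr^0(\mathbf{v}_n=\mathbf{e})\leq \Pr^0(x_{n-1}=1)\to 0$. The second uses a union bound on the complementary event:
\[
\Pr^1(\mathbf{v}_n\neq \mathbf{e}) \leq \sum_{k=1}^{K}\Pr^1(x_{n-k}=0),
\]
and each summand tends to zero with $K$ fixed. For (ii), I apply Bayes' rule with $\Pr(\theta=0)=\Pr(\theta=1)=1/2$ to get
\[
\pi_n \;=\; \frac{\Pr^1(\mathbf{v}_n=\mathbf{e})}{\Pr^1(\mathbf{v}_n=\mathbf{e}) + \Pr^0(\mathbf{v}_n=\mathbf{e})},
\]
and substitute the limits from (i). Note that $\Pr(\mathbf{v}_n=\mathbf{e})>0$ for large $n$ by (i), so the conditioning is legitimate.

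For (iii), the key observation is that $s_n$ is independent of $\mathbf{v}_n$ conditional on $\theta$, since $\mathbf{v}_n$ is determined by $(s_1,\ldots,s_{n-1})$ and the randomization of earlier agents. A Bayesian update then gives, for almost every $s$,
\[
f_n(s) \;=\; \frac{\Pr^1(\mathbf{v}_n=\mathbf{e})}{\Pr^1(\mathbf{v}_n=\mathbf{e}) + \Pr^0(\mathbf{v}_n=\mathbf{e})\cdot \frac{d\mathbb{F}_0}{d\mathbb{F}_1}(s)}.
\]
The BLR assumption (Assumption~\ref{def:BLR}) bounds $d\mathbb{F}_0/d\mathbb{F}_1(s)<M$ almost everywhere, so the correction term in the denominator is at most $M\cdot \Pr^0(\mathbf{v}_n=\mathbf{e})\to 0$ uniformly in $s$, which together with $\Pr^1(\mathbf{v}_n=\mathbf{e})\to 1$ yields uniform convergence $f_n(s)\to 1$ off a null set. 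This is the one place where BLR is essential, as was flagged in the Introduction and repeatedly in the paper.

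For (iv), I again invoke the conditional independence of $s_n$ from $\mathbf{v}_n$ under $\theta=1$ to write $\gamma_n=\Pr^1(\sigma_n(\mathbf{e},s_n)=0)=\Pr^1(x_n=0\mid \mathbf{v}_n=\mathbf{e})$. Then
\[
\gamma_n\cdot \Pr^1(\mathbf{v}_n=\mathbf{e}) = \Pr^1(x_n=0,\,\mathbf{v}_n=\mathbf{e}) \leq \Pr^1(x_n=0) \to 0,
\]
and dividing through using (i) gives $\gamma_n\to 0$. The only minor subtlety anywhere in the proof is justifying the conditional-independence claim used in (iii) and (iv): it follows because, even under randomized strategies, $\mathbf{v}_n$ is a function of $\theta$, $(s_1,\ldots,s_{n-1})$, and the exogenous randomizers $(z_1,\ldots,z_{n-1})$, all of which are independent of $s_n$ given $\theta$. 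I do not foresee a serious obstacle beyond this bookkeeping.
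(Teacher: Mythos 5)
Your proposal is correct and follows essentially the same route as the paper's own proof: part (i) via the union bound, part (ii) via Bayes' rule with equal priors, part (iii) via the conditional independence of $s_n$ and $\mathbf{v}_n$ together with the BLR bound (you write the posterior directly as a fraction where the paper uses the odds ratio, but the computation is identical), and part (iv) via the factorization $\Pr^1(x_n=0,\mathbf{v}_n=\mathbf{e})=\gamma_n\Pr^1(\mathbf{v}_n=\mathbf{e})$. No gaps.
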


\begin{proof}
\begin{enumerate}[(i)]
\item Fix some $\epsilon >0$. By the learning in probability assumption, 
$$\lim_{n \rightarrow \infty}\Pr^0(\mathbf{v}_n=  \e)
\leq \lim_{n \rightarrow \infty}\Pr^0(x_n=  1)=0.$$
Furthermore, 
there exists ${N} \in \mathbb{N}$ such that for all $n>{N}$,
\[
\mathbb{P}^1(x_n=0 )<\frac{\epsilon}{K}.
\]
Using the union bound, we obtain 
\[
\mathbb{P}^1(\mathbf{v}_n= \mathbf{e}) \geq 1 -\sum_{k=n-K}^{n-1} \Pr^1(x_k=0) > 1-\epsilon,
\]
for all $n>N+K$. Thus,  $\lim_{n \rightarrow \infty}\Pr^1(\mathbf{v}_n=  \e)>1-\epsilon$. Since $\epsilon$  is arbitrary, the result for $\Pr^1(\mathbf{v}_n=  \e)$ follows.

\item  Using the Bayes rule and the fact that the two values of $\theta$ are a priori equally likely, we have 
\[
\pi_n=\frac{\mathbb{P}^1(\vn=\e)}{\mathbb{P}^0(\vn=\e)+\mathbb{P}^1(\vn=\e )}.
\]
The result follows from part (i).

\item Since the two states of the world are a priori equally likely, the ratio ${f_n(s)}/({1-f_n(s)})$ of posterior probabilities, is equal to the likelihood ratio associated with the information $\mathbf{v_n}=\e$ and $s_n=s$, i.e, 
\[
\frac{f_n(s)}{1-f_n(s)}=\frac{\Pr^1(\vn=\e)}{\Pr^0(\vn=\e)} \cdot \frac{d\mathbb{F}_1}{d\mathbb{F}_0}(s),
\]
almost everywhere, where we have used the independence of $\mathbf{v}_n$ and $s_n$ under either state of the world. 
Using the BLR assumption, 
\[
\frac{f_n(s)}{1-f_n(s)}\geq\frac{1}{{M}}\cdot \frac{\Pr^1(\vn=\e)}{\Pr^0(\vn=\e)}.
\]
almost everywhere. Hence, 
using the result in part (i), 
\[
\lim_{n \rightarrow \infty}\frac{f_n(s)}{1-f_n(s)}=\infty, 
\]
uniformly over  all $s \in S$,  except possibly over a countable union of zero measure sets (one zero measure set for each $n$). It follows that
 $\lim_{n \rightarrow \infty}f_n(s)=1$, uniformly over $s \in S$, except possibly on a zero measure set.

\item We note that
\[
\Pr^1(x_n=0, \vn=\e ) =\Pr^1(\vn=\e) \cdot 
\gamma_n.
\]
Since $\Pr^1(x_n=0, \vn=\e )\leq \Pr^1(x_n=0)$, we have 
$\lim_{n \rightarrow \infty} \Pr^1(x_n=0, \vn=\e ) =0$. Furthermore, from part (i), 
$\lim_{n \rightarrow \infty}\Pr^1(\mathbf{v}_n=  \e)=1$. It follows that 
$\lim_{n\to\infty} \gamma_n=0$.
\end{enumerate} \end{proof}

We now proceed to the main part of the proof. We will argue that under the learning assumption, and in the limit of large $n$, it is more profitable for agent $n$ to choose $1$ when observing a sequence of ones from its immediate predecessors, rather than choose $0$, irrespective of its private signal. This implies that after some finite time $N$, the agents will be copying their predecessors' action, which is inconsistent with learning.  

\begin{proof}[Proof of Theorem \ref{thrm:frwrdlooking}]
Fix some $\epsilon \in (0,1-\delta)$. 
%By Lemma \ref{lem:frwrdproperties}(iv), there exists some $\tilde{N} \in \mathbb{N}$ such that for all $n>\tilde{N}$, $\gamma_n<\epsilon/2$.
%For $n>\tilde{N}$, 
We define 
\[
t_n={\sup}\Big\{t: \sum_{k=n}^{n+t}\gamma_k\leq \epsilon\Big\}.
\]
(Note that $t_n$ can be, in principle, infinite.) 
Since $\gamma_k$ converges to zero (Lemma \ref{lem:frwrdproperties}(iv)), it follows that 
$\lim_{n \rightarrow \infty}t_n=\infty$.  

Consider an agent $n$ who observes $\vn=\e$ and  $s_n=s$, and who makes a decision $x_n=1$. 
(To simplify the presentation, we assume that $s$ does not belong to any of the exceptional, zero measure sets involved in earlier statements.)
The (conditional) probability that agents   $n+1,\ldots, n+t_n$ all decide $1$  is 
\begin{align*}
\Pr&\left(\bigcap_{k={n+1}}^{n+t_n} \left \{\sigma_k(s_k, \e)=1\right\} \right)  = \prod_{k=n+1}^{n+t_n}\left(1-\gamma_k\right)\\
&\geq 1- \sum_{k=n+1}^{n+t_n}\gamma_k\geq 1-\epsilon.
\end{align*}
With agent $n$ choosing the decision $x_n=1$, 
its payoff can be lower bounded by considering only the payoff obtained when $\theta=1$ (which, given the information available to agent $n$, happens with probability $f_n(s)$) and all agents up to $n+t_n$ make the same decision (no switching):
\[
U_n(1;\e,s) \geq f_n(s)\left( \sum_{k=n}^{n+t_n} \delta^{k-n}  \right) (1-\epsilon).
\]
Since $f_n(s)\leq 1$ for all $s \in S$, and 
\[
\sum_{k=n}^{n+t_n} \delta^{k-n}  \leq \frac{1}{1-\delta},
\]
we obtain 
\[
U_n(1;\e,s)\geq f_n(s)\left( \sum_{k=n}^{n+t_n} \delta^{k-n}  \right) -\frac{\epsilon}{1-\delta}.
\]
Combining with part (iii) of Lemma \ref{lem:frwrdproperties} and the fact that $\lim_{n \rightarrow \infty}t_n=\infty$, we obtain
\begin{equation}
\liminf_{n\rightarrow \infty}U_n(1; \e, s) \geq \frac{1}{1-\delta}-\frac{\epsilon}{1-\delta}. \label{eq:limitinf}
\end{equation}

On the other hand, the payoff from deciding $x_n=0$ can be bounded from above  as follows:
\begin{align*}
U_n&(0;\e,s)\\
&=\mathbb{E}\left[\mathbf{1}_{\theta=0}+\sum_{k=n+1}^{\infty}\delta^{k-n} \mathbf{1}_{x_k=\theta}\ \Big| \ \xdn=\e, s_n=s\right]\\
   									 &\leq \Pr(\theta=0\mid \xdn=e,s_n=s)+\frac{\delta}{1-\delta} \\
   									 &=1-f_n(s)+\frac{\delta}{1-\delta}.
\end{align*}
Therefore, using part (iii) of Lemma \ref{lem:frwrdproperties}, 
\begin{equation}
\limsup_{n \rightarrow \infty} U_n(0;\e,s) \leq \frac{\delta}{1-\delta}. \label{eq:limitsup}
\end{equation}
Our choice of $\delta$ implies that 
\[
\frac{1}{1-\delta}-\frac{\epsilon}{1-\delta}>\frac{\delta}{1-\delta}.
\]
Then, (\ref{eq:limitinf}) and (\ref{eq:limitsup}) imply that there exists $N \in \mathbb{N}$ such that for all $n >N$,
\[
U_n(1; \e, s)>U_n(0;\e,s).
\]
almost everywhere in $S$.
Hence, by the equilibrium property of the strategy profile $\sigma_n(e,s)=1$ for all $n >N$ and for all $s \in S$, except possibly on a zero measure set. 

Suppose that the state of the world is $\theta=1$. Then, by part (i) of Lemma \ref{lem:frwrdproperties}, 
$\xdn$ converges to $\e$, in probability, and therefore it converges to $\e$ almost surely along a subsequence. In particular, the event 
$\{\xdn=\e\}$ happens infinitely often, almost surely. If that event happens and $n>N$,  
then every subsequent $x_k$ will be equal to 1. Thus, $x_n$ converges almost surely to $1$. By a symmetrical argument,  if $\theta=0$, then $x_n$ converges almost surely to $0$. Therefore, 
$x_n$ converges almost surely to $\theta$. This is 
impossible, by Theorem \ref{thm:almsure}. We have reached a contradiction, thus establishing that learning in probability fails under the equilibrium strategy profile $\sigma$. \end{proof}

\section{Conclusions}\label{se:conc}
We {have obtained sharp} results on the fundamental limitations of learning by a sequence of agents who only get to observe the decisions of a fixed number $K$ of immediate predecessors, under the assumption of Bounded Likelihood Ratios. Specifically, we have shown that almost sure  learning is impossible whereas learning in probability is possible if and only if $K>1$.  We then 
studied the learning properties of the equilibria of a game where agents are forward looking, with a discount factor $\delta$ applied to to future decisions. As $\delta$  ranges in $[0,1)$ the resulting strategy profiles vary from the myopic ($\delta=0$) towards the case of fully aligned objectives ($\delta\rightarrow 1$). Interestingly, under a full alignment of objectives and a central designer, learning is possible when $K\geq 2$, yet learning fails to obtain at any equilibrium of the associated game, and for any 
$\delta\in [0,1)$.

The scheme in Section \ref{sec:learnalgo} is only of theoretical interest, because the rate at which the probability of error decays to zero is extremely slow. This is quite unavoidable, even for the much more favorable case of unbounded likelihood ratios \cite{TTW08}, and we do not consider the problem of improving the convergence rate a promising one.

The existence of a decision profile that guarantees learning in probability (when $K\geq 2$) naturally leads to the question of whether it is possible to provide incentives to the agents to behave accordingly. It is known  \cite{Cov69,AtPa89,Acem10} that for myopic Bayesian agents, learning in probability does not occur, which raises the question of designing a game whose equilibria have desirable learning properties.  Another interesting direction is the characterization of the structural properties of decision profiles that allow or prevent learning whenever the latter is achievable. 

{Finally, one may consider extensions to the case of $m>2$ hypotheses and $m$-valued decisions by the agents. Our negative results are expected to hold, and the construction of a decision profile that learns when $K\geq m$, is also expected to go through, paralleling a similar extension in \cite{Kop75}.}

 \bibliography{main}
 \bibliographystyle{plain}

\end{document}